\documentclass[11pt]{article}
\usepackage{amsfonts}
\usepackage{amssymb}
\usepackage{amstext}
\usepackage{amsmath}
\usepackage{xspace}
\usepackage{theorem}
\usepackage{thmtools}
\usepackage{mathtools}
\usepackage{graphicx}
\usepackage{graphics}
\usepackage{colordvi}
\usepackage{color}
\usepackage{paralist}
\usepackage{bm}
\usepackage{mathpazo}
\usepackage{boxedminipage}
\usepackage[percent]{overpic}

\textheight 9in \advance \topmargin by -0.5in \textwidth 6.5in
\advance \oddsidemargin by -0.8in

\newtheorem{theorem}{Theorem}
\newtheorem{lemma}[theorem]{Lemma}
\newtheorem{claim}[theorem]{Claim}

\newtheorem{corollary}[theorem]{Corollary}

\newtheorem{definition}{Definition}

\newenvironment{proof}{\par \smallskip{\bf Proof:}}{\hfill\stopproof}
\def\stopproof{\square}
\def\square{\vbox{\hrule height.2pt\hbox{\vrule width.2pt height5pt \kern5pt
            \vrule width.2pt} \hrule height.2pt}}

\theoremstyle{definition}

\theoremstyle{remark}

\newcommand{\OPT}{\mathrm{OPT}}
\newcommand{\capMMTC}{\mathrm{CapMMTC}}
\newcommand{\MMTC}{\mathrm{MMTC}}
\newcommand{\rMMTC}{\mathrm{RMMTC}}
\newcommand{\cov}{\mathrm{cov}}
\newcommand{\caprMMTC}{\mathrm{CapRMMTC}}
\newcommand{\opt}{T^{\star}}

\newcommand{\Real}{\mathbb{R}}
\newcommand{\bigO}{{\cal O}}

\newcommand{\calT}{\mathcal{T}}

\newcommand{\calN}{\mathcal{N}}
\newcommand{\calA}{\mathcal{A}}
\newcommand{\calM}{\mathcal{M}}
\newcommand{\calB}{\mathcal{B}}
\newcommand{\calC}{\mathcal{C}}

\newcommand{\hatT}{\widehat{T}}

\newcommand{\nullset}{\phi}

\renewcommand{\sp}{\hspace*{0.1in}}

\renewcommand{\setminus}{\backslash}

\newcommand{\restr}[1]{\ensuremath{\bigl.#1\bigr|}}

 % thin space, limits underneath in displays
\DeclareMathOperator*{\argmax}{argmax}

\DeclarePairedDelimiter\ceil{\lceil}{\rceil}
\DeclarePairedDelimiter\floor{\lfloor}{\rfloor}

\pagestyle{plain}

\everymath{\displaystyle}

\begin{document}

\title{A Constant Factor Approximation for \\ Capacitated Min-Max Tree Cover}

\author{
        Syamantak Das\thanks{
        IIIT Delhi, India,
        {\tt {\bf email:} syamantak@iiitd.ac.in}}
%%%%%%%%%%%%
\and
        Lavina Jain\thanks{
        IIIT Delhi, India,
        {\tt {\bf email:} lavina16052@iiitd.ac.in}}
\and
         Nikhil Kumar\thanks{
        IIT Delhi, India,
        {\tt {\bf email:} nikhil@cse.iitd.ac.in}}
}

\date{\today}

\maketitle

\begin{abstract}
    Given a graph $G=(V,E)$ with non-negative real edge lengths and an integer parameter $k$, the (uncapacitated) Min-Max Tree Cover problem seeks to find a set of at most $k$ trees which together span $V$ and each tree is a subgraph of $G$. The objective is to minimize the maximum length among all the trees. In this paper, we consider a capacitated generalization of the above and give the first constant factor approximation algorithm. In the capacitated version, there is a hard uniform capacity ($\lambda$) on the number of vertices a tree can cover. Our result extends to the rooted version of the problem, where we are given a set of $k$ root vertices, $R$ and each of the covering trees is required to include a distinct vertex in $R$ as the root. Prior to our work, the only result known was a $(2k-1)$-approximation algorithm for the special case when the total number of vertices in the graph is $k\lambda$~[Guttmann-Beck and Hassin, {\em J. of Algorithms, 1997}]. Our technique circumvents the difficulty of using the minimum spanning tree of the graph as a lower bound, which is standard for the uncapacitated version of the problem~[Even et al.,{\em OR Letters 2004}]~[Khani et al.,{\em Algorithmica 2010}]. Instead, we use {\em Steiner trees} that cover $\lambda$ vertices along with an iterative refinement procedure that ensures that the output trees have low cost and the vertices are well distributed among the trees.
\end{abstract}
\newpage

\section{Introduction}
\label{sec:intro}
Covering vertices of a given graph using simpler structures, for example, paths, trees, stars and so on, have long attracted the attention of the Computer Science and Operations Research communities. This can be attributed to a variety of applications in vehicle routing, network design and related problems. One classical example is the so-called `Nurse Location Problem'~\cite{EvenGKRS04}. The goal is to place a group of nurses at different locations and finding a tour for each of them so that every patient is visited by a nurse. A similar setting arises in vehicle routing. Suppose we are given a set of vehicles, initially located at a given set of depots. The goal is to find a tour for each of these vehicles, each starting and ending at the respective depots so as to cover client demands at various locations. One of the most popular objectives is to minimize the maximum distance travelled by any vehicle, also known as the {\em makespan} of the solution. A standard reduction shows that this problem is equivalent, within an approximation factor of $2$, to finding trees in a graph such that all vertices in the graph are covered by the union of these trees. In this paper, we consider a natural generalization of the above setting. As before, we are given a set of vehicles, initially located at a given set of depots and a set of clients. Additionally one package is to be delivered to each of the clients and each vehicle can carry at most a fixed number of packages (all packages are identical). We have to find a set of tour such that each client receives a package and the objective, as before is to minimize the maximum distance travelled by any vehicle. This can be seen as a capacitated version of the Nurse Location Problem, where each nurse can visit at most a given number of patients. We formally define the problem now, which will be useful in further discussions.

\subsection{Notations and Preliminaries} \label{Preliminaries}
We set up some preliminaries, notations and definitions from literature that will be useful in the exposition of our contributions. The set of positive integers $\{1,2,,\ldots, n \}$ is denoted by $[n]$. GIven a graph $G=(V,E)$, $H=(V_{H},E_{H})$ is a subgraph of $G$ if $V_{H} \subseteq V$ and $E_{H} \subseteq E$. We use $\ell(H)$ to denote the total length of edges in the subgraph $H$. For any two  vertices $u,v$, $d(u,v)$ denotes the shortest path distance between $u,v$. We extend the definition to subsets of vertices $U,V$ - define $d(U,V)$ to be $\min_{u\in U, v \in V} d(u,v)$. The set of vertices in any graph $H$ is denoted by $V(H)$.
 \begin{definition}
    {\bf Min-Max Tree Cover Problem ($\MMTC$)} Given a graph $G(V,E)$, edge length $\ell:E\rightarrow \Real_{\geq 0}$ and a parameter $k$, one is required to output a set of $k$ trees $T_i$ for $i \in [k]$, such that each $T_{i}$ is a subgraph of $G$ and $\cup_{i=1}^k V(T_i) = V$. The objective is to minimize $max_{i=1}^{k} \ell(T_i)$. 
\end{definition}
Note that two trees in a feasible solution can share vertices as well as edges. In the capacitated version of the problem, we are also given an additional parameter $\lambda$. A feasible solution to the Capacitated Min-Max k-Tree Cover consists of a set of trees (not necessarily disjoint) $\calT = \{ T_{1},T_{2},\ldots,T_{k} \}$ along with an assignment of each vertex $v \in V$ to one of the trees containing it, such that no more than $\lambda$ vertices are assigned to any $T \in \calT$. The goal, as in the case of uncapacitated case, is to minimize the maximum length of any tree in $\calT$. Let $\cov(T)$ denote the set of vertices assigned to $T$. We think of the vertices in $\cov(T)$ as being covered by $T$. Hence, in any feasible assignment $\cup_{T \in \calT}\cov(T)=V$ and $|\cov(T)| \leq \lambda$ for all $T \in \calT$. Note that a vertex may be a part of multiple trees, but it is covered by exactly one of them. Whenever the notation $\cov(T)$ is used, there is an (implicit) underlying assignment of vertices to the trees. 
\begin{definition}
    {\bf Capacitated Min-Max Tree Cover ($\capMMTC$)} Given a graph $G(V,E)$, edge length $\ell:E\rightarrow \Real_{\geq 0}$, and two integer parameters $k,\lambda$, one is required to output a set of $k$ trees $T_{i}$ for $i\in [k]$ along with an assignment of every vertex in $G$ to a tree, such that each $T_{i}$ is a subgraph of $G$, $\cup_{i=1}^k \cov(T_i) = V$ and $|\cov(T_i)| \leq \lambda$ for $i \in [k]$. The objective is to minimize $max_{i=1}^k \ell(T_i)$.
\end{definition} 
In the rooted version of MMTC and CapMMTC, we are given a set of $k$ roots as well. The only additional constraint being that each tree in the output should contain a distinct root. We will refer to these problems as Rooted Min-Max Tree Cover Problem($\rMMTC$) and Capacitated Rooted Min-Max Tree Cover ($\caprMMTC$).

\subsection{Our Contribution}
We give the first polynomial time constant factor approximation algorithm for $\capMMTC$ and $\caprMMTC$. 
\begin{theorem} \label{thm:capmmtc}
There is a polynomial time $\bigO(1)$-approximation algorithm for $\capMMTC$. 
\end{theorem}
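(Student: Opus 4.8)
The plan is to reduce $\capMMTC$ to a version where the target optimal value $\opt$ is known (by binary search over edge lengths) and then build the $k$ trees via a two-phase approach: first produce a collection of low-cost Steiner trees each covering exactly (roughly) $\lambda$ vertices, then merge/split these into at most $k$ trees of cost $\bigO(\opt)$ each covering at most $\lambda$ vertices.

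\medskip

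\noindent\textbf{Guessing $\opt$ and preprocessing.} First I would guess $\opt$ (there are only polynomially many candidate values, namely sums we can bound, or more simply we binary-search on the $\bigO(n^2)$ distinct shortest-path distances and a scaling argument). Having fixed a guess $T$, I would delete all edges of length $> T$ and work in the metric closure; any vertex at distance $> T$ from all others makes the guess infeasible. A standard observation (used for uncapacitated $\MMTC$) is that if $\opt = T$, then the graph decomposes into connected components each of which the optimal solution covers with trees totalling length $\bigO(T)$ per tree; within each component we may assume the optimum uses some number $k_j$ of trees with $\sum_j k_j \le k$, and each such tree covers at most $\lambda$ vertices. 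The key difficulty the abstract flags is that $\mst$ of a component is \emph{not} a valid lower bound here: a component may have few vertices but still need many trees because of the capacity constraint, so $\mst/k$ can be arbitrarily smaller than $\opt$.

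\medskip

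\noindent\textbf{Building $\lambda$-Steiner trees and refinement.} The heart of the argument, following the abstract's hint, is to compute for each component a forest of Steiner trees, each spanning $\lambda$ vertices, of small cost --- this is where I would invoke (a bicriteria/approximate version of) min-cost Steiner tree or a known result on partitioning a tree into subtrees each containing a prescribed number of terminals (Even--Garg--Könemann--Ravi--Sinha style edge-splitting on an Euler tour, which partitions a tree of length $L$ into pieces each of length $\bigO(L/k + \text{diam})$). Concretely: take a good tree (or $2$-approximate Steiner/spanning structure) on the component, and split it into pieces each covering $\Theta(\lambda)$ vertices; a piece covering more than $\lambda$ vertices gets its surplus reassigned. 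The subtle part --- the \emph{iterative refinement} mentioned --- is that naive splitting can leave some trees covering far fewer than $\lambda$ vertices (wasting tree budget) or leave ``orphan'' vertices that are cheap to attach individually but expensive in aggregate. I would set up a potential-function or charging argument that repeatedly moves vertices between adjacent trees (using that adjacent pieces are within $\bigO(T)$ of each other) to drive every tree's coverage up to $\Omega(\lambda)$ while keeping each tree's length $\bigO(T)$; termination follows because each refinement step strictly increases a monovariant (e.g. the number of ``full'' trees, or lexicographic coverage vector).

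\medskip

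\noindent\textbf{Bounding the number of trees and assembling the approximation.} Once every output tree covers $\Omega(\lambda)$ vertices and has length $\bigO(T)$, a counting argument bounds the total number of trees produced by $\bigO(n/\lambda)$; I then need to show this is $\bigO(k)$ when $T = \opt$, which is where the capacity-aware lower bound $k \ge \lceil |V_j|/\lambda\rceil$ per component, combined with a connectivity lower bound ($k_j$ must be large enough that $k_j$ trees of length $T$ can reach all of $V_j$), closes the loop. If the count exceeds $k$ we declare the guess too small and increase $T$. For the \emph{rooted} case I would additionally ensure each tree is merged with a path to a distinct root of length $\bigO(T)$, which is possible exactly when the guess is feasible. \textbf{The main obstacle} I anticipate is the refinement/charging step: proving that the local moves both terminate in polynomial time and simultaneously control \emph{both} the per-tree length and the per-tree coverage (lower and upper bounded) without blowing up the tree count --- balancing these three quantities at once is precisely what makes the capacitated problem resist the classical $\mst$-based techniques.
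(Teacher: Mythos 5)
Your skeleton (guess $\opt$ by binary search, delete long edges, work per connected component, build low-cost trees each covering $\Theta(\lambda)$ vertices, then merge/redistribute) matches the paper's, but the two steps you yourself flag as ``the main obstacle'' are exactly where the proposal does not go through, and the mechanism you propose for them is the one the paper explicitly identifies as failing. Splitting a single spanning/Steiner structure along an Euler tour and then doing local vertex moves between adjacent pieces cannot simultaneously control piece length and piece coverage: cutting by length gives pieces covering anywhere from $1$ to $n$ vertices, cutting by vertex count gives pieces of length up to the whole structure's length $\Theta(k\opt)$, and a chain of local moves between adjacent pieces can accumulate $\Omega(k\opt)$ cost before a sparse piece is filled --- no monovariant you name bounds the per-tree length during these exchanges. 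The paper instead greedily extracts $\lceil\lambda/4\rceil$-Steiner trees of cost at most $4\opt$ via Garg's $k$-MST approximation until no more exist, and covers the leftover ``bad'' vertices by an iterative refinement whose engine is a Hall-set argument on a bipartite graph between bad-vertex trees and good trees (edges $=$ distance at most $\opt$): if a Hall set $S$ survives, the optimal solution restricted to $S$ uses fewer trees than $|S|$ (up to the number of light optimal trees), so rerunning the Khani--Salavatipour $\MMTC$ algorithm on $S$ with budget $|S|-1$ succeeds and strictly shrinks the instance. That global counting argument, not local exchanges, is what lets every bad tree either be merged with a nearby good tree or be certified as corresponding to a \emph{light} optimal tree.

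The second gap is the final count. You bound the number of output trees by $\bigO(n/\lambda)$ and hope this is $\bigO(k)$, but the problem demands \emph{at most} $k$ trees, and trees covering only $\Omega(\lambda)$ (say $\lambda/4$) vertices would give roughly $4k$ trees, which is infeasible. The paper closes this by (i) padding each unmatched bad tree with $\lceil\lambda/2\rceil$ dummy vertices and proving, via the Hall-set lemma, that the number of unmatched trees is at most $k_{light}$, so $|V\cup V_{dummy}|\le k\lambda$ still holds; and (ii) a merge-then-split phase --- using Karaganis's theorem that the cube of a connected graph is Hamiltonian to order the trees so that consecutive ones are within $\bigO(\opt)$ --- that equalizes coverage to \emph{exactly} $\lambda$ vertices per tree, whence at most $k$ trees. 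Neither the light-tree accounting nor the exact-$\lambda$ equalization appears in your proposal, and without them the bound of at most $k$ trees does not follow.
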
 
\begin{theorem} \label{thm:caprmmtc}
There is a polynomial time $\bigO(1)$-approximation algorithm for $\caprMMTC$. 
\end{theorem}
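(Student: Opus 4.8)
The plan is to reduce $\caprMMTC$ to $\capMMTC$ losing only a constant factor, by post-processing the trees produced by the algorithm of Theorem~\ref{thm:capmmtc} so that they can be re-rooted at the prescribed vertices $R=\{r_1,\dots,r_k\}$. First I would guess $\opt$, the optimum of the given $\caprMMTC$ instance, up to a factor of $2$; since $\opt$ lies between the smallest edge length and the sum of all edge lengths, there are only polynomially many candidate powers of two to try, and this guess affects the final ratio by only a constant. Since any feasible rooted solution is in particular a feasible $\capMMTC$ solution, the optimum of the unrooted instance is at most $\opt$, so running the algorithm of Theorem~\ref{thm:capmmtc} yields trees $\widehat T_1,\dots,\widehat T_k$ with $\ell(\widehat T_j)\le c\,\opt$ for a constant $c$, with $\bigcup_j \cov(\widehat T_j)=V$ and $|\cov(\widehat T_j)|\le\lambda$. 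The only property I will use of these trees is that $\widehat T_j$ has diameter at most $c\,\opt$, so any two vertices of $\cov(\widehat T_j)$ are within $c\,\opt$ of each other.

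The crux is the re-rooting, and I would organize it by first clustering the roots. Build the threshold graph $H$ on $R$ that joins $r_i,r_{i'}$ whenever $d(r_i,r_{i'})\le (c+2)\opt$, and let $C_1,\dots,C_p$ be its connected components; roots in distinct components are at distance $>(c+2)\opt$. In the optimal rooted solution each $v\in V$ lies in a tree rooted at some $o(v)\in R$ with $d(v,o(v))\le\opt$. Combining this with the diameter bound, for any tree $\widehat T_j$ and any $v,v'\in\cov(\widehat T_j)$ we get $d(o(v),o(v'))\le 2\opt+c\,\opt= (c+2)\opt$, so $o(v)$ and $o(v')$ lie in the same component of $H$; call it $\gamma(j)$. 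Setting $W_c:=\bigcup_{j:\gamma(j)=c}\cov(\widehat T_j)$, the sets $W_1,\dots,W_p$ partition $V$; and since every vertex of $W_c$ is, in the optimal solution, covered by one of the $|C_c|$ trees rooted in $C_c$ (each assigned at most $\lambda$ vertices), we get $|W_c|\le |C_c|\cdot\lambda$, and moreover the restrictions of those optimal trees to $W_c$ show that the sub-task ``cover $W_c$ by $|C_c|$ trees rooted at the distinct roots of $C_c$, each covering at most $\lambda$ vertices'' has a solution of value at most $\opt$. Thus the instance splits into independent per-cluster sub-instances, each of sub-optimum $O(\opt)$.

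It remains to solve each sub-instance, and this is the step I expect to be the main obstacle. Here the roots of $C_c$ are tied together by a path of $H$-edges, each of length $O(\opt)$, every vertex of $W_c$ is within $\opt$ of some root of $C_c$, and $|W_c|\le|C_c|\lambda$; the task is to carve $W_c$ into at most $|C_c|$ groups of size at most $\lambda$ that admit $O(\opt)$-length Steiner trees and can be matched one-to-one to the roots of $C_c$. This is precisely the rooted analogue of the construction behind Theorem~\ref{thm:capmmtc}: grow, from the still-unused roots of $C_c$, Steiner trees that each collect $\lambda$ not-yet-covered vertices of $W_c$, and run the same iterative refinement that redistributes vertices across the trees so that each ends up covering at most $\lambda$ of them while keeping its length $O(\opt)$ — the requirement that each tree contain its designated root being extra bookkeeping (handled by always attaching the growing tree to an unused root, feasible because the roots are $O(\opt)$-chained and each vertex of $W_c$ is $O(\opt)$-close to a root) rather than a genuinely new difficulty. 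Finally, any root that receives no tree is output as a trivial one-vertex tree covering nothing, which is allowed since its vertex is already covered elsewhere; this produces exactly $k$ trees, each containing a distinct root and of length $O(\opt)$, giving the claimed $\bigO(1)$-approximation for $\caprMMTC$.
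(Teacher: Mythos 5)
Your reduction to per-cluster sub-instances is sound as far as it goes: the observation that the optimal assignment $v\mapsto o(v)$ forces all of $\cov(\widehat T_j)$ into a single component of the threshold graph on $R$ is correct, and it cleanly partitions $V$ into sets $W_c$ each of which is covered in $\OPT$ by exactly the $|C_c|$ optimal trees rooted in $C_c$. But this decomposition only localizes the problem --- each sub-instance is still a full-fledged $\caprMMTC$ instance --- and the step you yourself flag as ``the main obstacle'' is exactly where the proof is missing. ``Grow Steiner trees from unused roots and attach each to an unused root'' does not work: the roots of $C_c$ are only chained by edges of length $O(\opt)$, so the chain can have total length $\Omega(|C_c|\cdot\opt)$, and nothing prevents a greedily grown tree from being forced onto an unused root that is far away along the chain. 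Worse, the machinery of Theorem~\ref{thm:capmmtc} that you propose to reuse ends with Algorithm $\calC$, which merges consecutive trees along a Hamiltonian path of $G_P^3$ and then re-cuts them into groups of exactly $\lambda$; this redistribution destroys any association between a group of vertices and a nearby root, so the claim that each final tree can be given a \emph{distinct} root within $O(\opt)$ is not ``bookkeeping'' --- it is the entire difficulty of the rooted version, and your sketch gives no mechanism for it.

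The paper's proof supplies precisely this missing mechanism. It first applies Theorem~\ref{thm:capMMTCg} with capacity $\lambda-1$ (so that a root can later be attached without violating the capacity), obtaining trees $T'_1,\dots,T'_{k'}$ of cost $O(\opt)$, and then sets up a bipartite matching between these trees and the roots, with an edge when $d(r_i,T'_j)\le\opt$. If every tree is matched, each is joined to its (distinct) matched root at extra cost $\opt$. If not, a Hall set $S$ of trees certifies that all vertices covered by $S$ are, in $\OPT$, covered by trees rooted in $N(S)$; hence only $|N(S)|<|S|$ optimal trees serve them, so re-running the capacitated algorithm on those vertices with budget $|N(S)|$ strictly reduces the number of trees, and the process terminates in at most $k'$ rounds with a perfect matching. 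This second Hall-set refinement (on a tree--root bipartite graph rather than the tree--tree graph of Algorithm IterRefine) is the idea your proposal needs and does not contain; your root-clustering preprocessing is compatible with it but does not replace it.
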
 
Our algorithms are much more intricate and involved than those which give an $\bigO(1)$ approximation for the uncapacitated case, as is generally the case with capaciated versions of many problems in combinatorial optimization, for instance k-median and facility location problems. These are the first approximation algorithms for both the problems, to the best of our knowledge. The only known result is a $(2k-1)$-approximation for the special case when the total number of vertices in the graph is $k\lambda$ and hence every tree must contain exactly $\lambda$ vertices~\cite{GuttmanBH97}. All our algorithms are combinatorial. We first show the result for the unrooted case, ie. $\capMMTC$ and then extend the ideas to prove the result for the rooted problem $\caprMMTC$. We show that value of the constant in Theorem \ref{thm:capmmtc} and \ref{thm:caprmmtc} is $\leq 300$. For simplicity of exposition, we have not tried to optimize the constant. We believe that it should be possible to do so with some more effort and leave it as an open problem. We prove Theorem~\ref{thm:capmmtc} in Section~\ref{sec:capacitated} and defer the proof of Theorem~\ref{thm:caprmmtc} to the Appendix, Section~\ref{sec:rooted}. 

\subsection{Related Work}
Even et al.~\cite{EvenGKRS04} and Arkin et al.~\cite{ArkinHL06} gave $4$-approximation algorithms for both rooted and unrooted (uncapacitated) $\MMTC$. Khani et al. improved the unrooted version to a 3-approximation~\cite{KhaniS14}, whereas Nagamochi and Okada~\cite{NagamochiK07} gave a $3-2/(k+1)$-approximation for the special case of the rooted version where all roots are co-located. On the other hand, the $\MMTC$ problem has been shown to be hard to approximate to a factor better than $3/2$, assuming $P\neq NP$~\cite{ZhouQ10}. The problem has been considered when the underlying graph has special structure. The rooted version of the $\MMTC$ problem on a tree admits a PTAS, as shown recently by Becker and Paul~\cite{BeckerP18}, while the unrooted version has a $(2+\epsilon)$-approximation, given by Nagamochi and Okada~\cite{NagamochiK07}. Further, Chen and Marx gave fixed parameter tractable algorithms for the problem on a tree ~\cite{ChenM18}. $\MMTC$ on a star is equivalent to the classical makespan minimization problem on identical machine, for which elegant EPTAS-es are well known~\cite{AlonAWY98}. In stark contrast, no approximation algorithms have been reported so far for the capacitated versions of these problems, to the best of our knowledge. Guttmann-Beck and Hassin gave an $(2k-1)$-approximation for the special case  where the number of vertices in each the tree is exactly $\lambda$~\cite{GuttmanBH97}. On the other special case of a star metric, an EPTAS result follows from the work on identical machine scheduling with capacity constraints~\cite{ChenJLZ16}. Interestingly, the problem becomes inapproximable if one disallows sharing of vertices and edges between the trees in the solution~\cite{GuttmanBH97}, even when $k=2$. A related but quite different setting is that of bounded capacity vehicle routing with the objective of minimizing the total length of all the tours. Here, a vehicle is allowed to make multiple tours to cover all the points, however, each tour can serve at the most $\lambda$ clients. This problem has been well studied, a $2.5-\lambda^{-1}$-approximation is known for general graphs~\cite{haimovich1985bounds}, while Becker et al. have reported a PTAS on planar graphs recently~\cite{BeckerKS19}. Capacitated versions of other classical combinatorial optimization problems are very well studied and we give a highly non-exhaustive list here. Capacitated  unweighted vertex cover admits a $2$-approximation~\cite{Kao17,Wong17}. Capacitated versions of clustering problems like k-center~\cite{AnBCCGMS15}, k-median~\cite{ByrkaRU16} and scheduling problems~\cite{SahaS18} have also been widely popular.

\subsection{High Level Ideas and Techniques}

Most algorithms for either $\MMTC$ or rooted $\MMTC$ build upon the following idea. Assume that we know the optimal tree cost - say $\opt$. Further, for ease of exposition, let us assume that the graph is connected and does not contain any edge of length more than $\opt$. Now consider the $k$ trees in the optimal solution, each of total length at most $\opt$. Adding at most $k-1$ edges to the union of these trees forms a spanning tree of the entire graph. Hence, the minimum spanning tree (MST) cost is upper bounded by $(2k-1)\opt$. This leads to the natural idea of starting with the MST of the graph. At a high-level, we can root the tree at an arbitrary vertex, traverse it bottom-up and chop it off as soon as the total cost of the traversed sub-tree is between $[2\opt, 4\opt]$. This needs to be done carefully and we would refer the reader to~\cite{EvenGKRS04} for details. Now, this can create at most $k$ partitions of the MST, each of cost at most $4\opt$. The rooted version uses a similar idea, but requires more care. Let us try to apply this idea to the capacitated case. A potential problem is that, we have no control over the structure of the MST. In particular, some part of the MST might be dense - it might contain a connected subtree that has small length but covers a large number of vertices, while some other parts might be sparse. Hence, cutting off the MST on the basis of length as above might end up producing infeasible trees, although the cost might be bounded. One possible idea to fix this could be to further cut off the dense subtrees and try and re-combine them with the sparse subtrees. However, it is not clear how to avoid either combining more than a constant number of subtrees or subtrees that are more than $\bigO(\opt)$ distance away from each other and hence cannot lead to a constant factor approximation. 

\noindent
{\bf Our Approach : Using $\lambda$-Steiner Trees.}
In order to develop the intuition for our core ideas, we focus on the special case where each tree in the optimal solution {\em covers} exactly $\lambda$ vertices ( note that each tree may span more than $\lambda$ vertices). We take a different approach to the problem by utilizing the concept of $\lambda$-Steiner Trees. Given a graph $G$ and a subset of vertices $R$ called {\em terminals}, a $\lambda$-Steiner tree on $R$ is a minimum length subtree of $G$ that contains exactly $\lambda$ vertices from $R$. We begin with the observation that each tree in the optimal solution covers $\lambda$ vertices and hence $\OPT$ can be thought of as an union of $k$ $\lambda$-Steiner trees in $G$, although not necessarily of the minimum possible cost. A natural algorithm is to pick an arbitrary root vertex and construct a $\lambda$-Steiner tree. Computing the $\lambda$-MST and hence $\lambda$-Steiner trees is NP-Hard and hence we resort to the $4$-approximation algorithm that essentially follows from Garg's algorithm~\cite{Garg05}. If we are lucky, we might end up capturing an optimal tree and continue. However, in the unlucky case, the $\lambda$-Steiner tree might cover vertices from several of the optimal trees. Now if we disregard these vertices in further iterations and try to build another $\lambda$-Steiner Tree on the uncovered vertices, we might be stuck since such vertices in the union of the optimal trees might be far away from each other. Hence we cannot guarantee that a low cost $\lambda$-Steiner tree exists on these vertices. 

We fix this problem by being less aggressive in the first step. We try to build as many $\ceil*{\lambda/2}$-Steiner trees as possible that have cost at the most $\bigO(\opt)$ using Garg's algorithm - call such trees {\em good}. At some point, we might be left with vertices such that there does not exist any $\ceil*{\lambda/2}$-Steiner trees of low cost that can cover them - let us call them {\em bad vertices}. In order to cover the bad vertices, we deploy an iterative clustering procedure. We begin by applying the algorithm of Khani and Salavatipour~\cite{KhaniS14} for $\MMTC$, henceforth termed as the KS-algorithm, on the bad vertices. Note that this will return at most $k$ trees each of cost $\bigO(\opt)$, although we still cannot prove any lower bound on the number of vertices covered by each tree. Next we set up a bipartite matching instance with these newly formed trees on the left hand side and the good trees on the right hand side. We introduce an edge between two trees if and only if they are separated by a distance of at the most $\opt$. The crucial claim now is - if there exists a Hall Set in this matching instance, say $S$, then the number of trees in $S$ is strictly greater than the number of trees that the optimal solution forms with the vertices covered by $S$. Hence, applying the KS-algorithm on the vertices in $S$ reduces the number of trees in $S$ without increasing the cost of each tree. This idea forms the heart of our algorithm. We apply the KS-algorithm iteratively until there is no Hall Set, at which point we can compute a perfect matching of the bad trees. Each bad tree can now be combined with a good tree to produce a tree that has cost $\bigO(\opt)$ and contains at least $\lambda/2$ vertices each. In a nutshell, the above procedure circumvents the problem of creating too many sparse trees. It ensures that every tree is sufficiently dense - covers at least $\ceil*{\lambda/2}$ vertices and are of low cost. Consider a modified graph $\tilde{G}$ created by contracting the edges of the dense trees. Since the original graph is connected and each edge has length at the most $\opt$, $\tilde{G}$ is also connected and any edge in $\tilde{G}$ has length at the most $\opt$. We utilize these properties to distribute the vertices and ensure that the final set of trees have exactly $\lambda$ vertices each and cost at the most $\bigO(\opt)$. We note that under the assumption that every optimal solution tree has exactly $\lambda$ vertices as well, the above algorithm will produce exactly $k$ trees.

Our algorithm for the capacitated problem builds upon the above ideas. However, one major bottleneck is that we can no longer assume that each tree in the optimal solution contains exactly $\lambda$ vertices. In fact, there could be trees which have a small number of vertices, say less than $\lambda/2$ - call them {\em light} trees and the rest {\em heavy}. Handling this situation requires more subtle ideas. We again start by creating $\ceil*{\lambda/4}$-Steiner trees of low cost as long as possible which we call good trees and as before, we shall be left with some bad trees that contain less than $\lambda/4$ vertices each, but have bounded cost. Unfortunately, the iterative refinement procedure is no longer guaranteed to produce a perfect matching, as before. However, we are able to show the following. Existence of a Hall Set even after applying the said refinement is a certificate of the fact that optimal solution contains a significant number of light trees - suppose this number is $k_{\ell}$. Then, we add $k_{\ell}\cdot \ceil*{\lambda/2}$ dummy vertices co-located with suitably chosen bad vertices. The upshot is that, this addition ensures that each bad tree now becomes a good tree, together with the dummy vertices. Further, the total number of vertices including the dummy vertices is still bounded above by $k\lambda$. Together, this gives us that, creating trees of size exactly $\lambda$ cannot result in more than $k$ trees in the solution.

%\begin{lemma}
%	\label{lem:2mst}
%	Let $G=(V,E)$ be an undirected weighted connected graph with edge cost $\ell:E\leftarrow \Real_{\geq 0}$. Then, for any subset $V'\subseteq V$, the cost of $\mst(V')$ is at the most $ \leq 2\ell(T(V'))$, where $T(V')$ is any sub-tree in $G$ that spans $V'$.
%\end{lemma}
%
%\begin{proof}[sketch]
%	Consider the tree $T(V')$. Make a multigraph $\hatT$ by taking two copies of each edge in $T(V')$. $\hatT$ is Eulerian since all degrees are even. Now use the Eulerian walk, to visit all the vertices, short-cutting whenever a vertex is repeated. The cost of this walk is upper bounded by $2T(V')$ and hence the MST cost on the induced metric  of $V'$ is upper bounded by the same quantity.
%\end{proof}

%We shall make a final simplifying assumption. Different trees in a solution of the $\capMMTC$ or $\caprMMTC$ problem are allowed to share edges and even vertices. However, we assume that the solution partitions the set of vertices in to $k$ disjoint subsets and the final trees in any solution is the MST on the induced subgraph of a particular partition. Applying Lemma~\ref{lem:2mst}, we lose, at the most, a factor $2$ due to this assumption. We make this simplification purely due to ease of exposition of our main ideas and can be easily removed. 

\section{Capacitated Min-Max Tree Cover}
\label{sec:capacitated}
In this section, we shall describe our algorithm for the $\capMMTC$ problem and prove Theorem~~\ref{thm:capmmtc}. The first step in our algorithm is to guess the value of the optimal solution - call it $\opt$. We remove all edges from $G$ that are of length bigger than $\opt$ since the optimal solution can never use any such edge. The resulting graph $\hat{G}$ has, say, $p$ connected components - call them $G_1, G_2, \cdots G_{p}, p \leq k$. Suppose, in the optimal solution, there are $k_i$ trees that cover all vertices in $G_i, i=1,2\cdots p$. Then, $\Sigma_{i=1}^{p} k_i =k$. For each connected component $i$, we run our algorithm to get at most $k_{i}$ trees with cost $\bigO(T^{*})$. Due to the above argument, in subsequent exposition, we shall assume that we have a connected graph $G(V,E)$ with edge lengths $\ell(e) \leq \opt, \forall e\in E$ and there are $k$ trees in the optimal solution that cover $V$ such that every tree has cost at most $\opt$ and covers at most $\lambda$ vertices. There could be multiple optimal solutions. For the purpose of our discussion we pick one arbitrarily and whenever we refer to an optimal solution, we mean this particular solution. We shall divide the trees in the optimal solution into two classes for the purpose of analysis. Define a tree $T$ in the optimal solution to be {\em light} if $|\cov(T)| \leq \floor*{\lambda/2}$ and {\em heavy} otherwise. Define $k_{heavy}$ and $k_{light}$ to be the number of heavy and light trees respectively. We shall prove the following theorem.
%Further, let us re-define the distance metric by computing the metric closure of edges in $G'$ as follows.
%
%\begin{align*}
%\ell'(e) &= \ell(e) , e\in E'\cap E \\
%         &= \text{shortest path distance between $u,v$ in }G', e=(u,v)\in E\setminus E'
%\end{align*}
\begin{theorem}
	\label{thm:capMMTCg}
	
	Given a connected graph $G(V,E)$ with the edge lengths $\ell:E\rightarrow\Real_{\geq 0}$, $\ell(e)\leq \opt, \forall e\in E$ and non-negative integers $k,\lambda$, suppose there exists $k$ trees $T_1,T_2,\dots,T_k$ along with an assignment of vertices to the trees, such that $T_{i}$ is a subgraph of $G$, $|\cov(T_{i})| \leq \lambda$ and $\ell(T_{i}) \leq \opt$ for $i \in [k]$. Then there exists a polynomial time algorithm that finds a set of trees $T'_1, T'_2, \cdots T'_{k'}$ along with an assignment of vertices to the trees such that $k'\leq k$, for each $T'_j$, $\ell(T'_j) = \bigO(\opt)$ and $|\cov(T'_j)| \leq \lambda$.   
	
\end{theorem}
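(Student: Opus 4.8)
The plan is to follow the high-level roadmap sketched in the introduction, carefully handling the complication that optimal trees need not all cover exactly $\lambda$ vertices. First I would set up the notion of a \emph{good} tree: using the $4$-approximation for $\lceil\lambda/4\rceil$-Steiner trees (Garg's algorithm), greedily extract as many disjoint $\lceil\lambda/4\rceil$-Steiner trees of cost $\bigO(\opt)$ as possible from $G$. What remains is a set of vertices that cannot be covered cheaply by such trees; I would run the KS-algorithm (Khani--Salavatipour) on these remaining (\emph{bad}) vertices to obtain at most $k$ trees, each of cost $\bigO(\opt)$, but with no a priori lower bound on how many vertices each covers. The goal then is to merge each bad tree with a good tree so that every resulting tree covers at least $\lceil\lambda/4\rceil$ vertices (hence is ``dense'') at cost $\bigO(\opt)$.

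The core of the argument is the iterative refinement via Hall's theorem. I would build a bipartite graph with bad trees on one side, good trees on the other, joining two trees if their distance is at most $\opt$. The key claim to establish is: if there is a Hall-violating set $S$ of bad trees (its neighborhood is smaller than $S$), then the optimal solution uses strictly fewer than $|S|$ trees to cover the vertices contained in $S$'s bad trees together with their neighboring good trees — because any optimal tree touching those vertices must, by the distance bound and the connectivity argument, lie ``close,'' and there simply aren't enough optimal trees to account for $|S|$ bad trees' worth of vertices. Consequently, re-running the KS-algorithm restricted to the vertices in $S$ (plus the relevant good trees) strictly decreases the number of bad trees without increasing costs beyond $\bigO(\opt)$. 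Iterating terminates (the count is a non-negative integer that strictly drops), leaving an instance with no Hall violation, so a perfect matching of bad trees into good trees exists; each matched pair is joined by a shortest path of length $\leq \opt$ to form a dense tree of cost $\bigO(\opt)$.

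The subtlety that requires new ideas, and which I expect to be the main obstacle, is that the matching need \emph{not} be perfect in the capacitated setting: a persistent Hall violation certifies that the optimal solution contains many \emph{light} trees. Here I would argue that if $k_\ell$ is (a lower bound on) the number of light optimal trees certified by the final Hall set, then adding $k_\ell\cdot\lceil\lambda/2\rceil$ dummy vertices, co-located with carefully chosen bad vertices, turns every remaining bad tree into a good (dense) tree while keeping the total vertex count (including dummies) at most $k\lambda$. This last inequality is exactly what lets us conclude $k'\le k$: once every tree is dense (covers $\geq \lceil\lambda/4\rceil$ real-or-dummy vertices, say $\Omega(\lambda)$) and the total is $\leq k\lambda$, the number of trees is $\bigO(k)$, and a final redistribution step drives it down to at most $k$.

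Finally, I would perform the redistribution. Contract all dense trees to get a graph $\tilde G$ that is connected and has all edge lengths $\leq \opt$ (inherited from $G$). Then run a bottom-up traversal / splitting-and-merging procedure on a spanning tree of $\tilde G$ to ensure each output tree covers \emph{exactly} $\lambda$ vertices (discarding dummies at the end, which only decreases coverage), paying an extra $\bigO(\opt)$ per merge because every contracted edge is cheap; the counting bound above guarantees at most $k$ trees survive. The main obstacle remains the Hall-set/light-tree accounting: getting the inequality ``$|S|$ bad trees $>$ number of optimal trees on $\cov(S)$'' tight enough, and correctly quantifying $k_\ell$ so that the dummy-vertex budget provably stays within $k\lambda$, is where the delicate combinatorial bookkeeping lives; everything else is a careful but routine assembly of Garg's algorithm, the KS-algorithm, Hall's theorem, and a standard tree-chopping argument.
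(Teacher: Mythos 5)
Your proposal is correct and follows essentially the same three-phase route as the paper: greedy $\lceil\lambda/4\rceil$-Steiner trees via Garg's algorithm, Hall-set-driven iterative refinement with the KS-algorithm where a persistent deficiency certifies light optimal trees and is absorbed by $\lceil\lambda/2\rceil$ dummy vertices per unmatched tree (keeping the total at most $k\lambda$), and a final redistribution on the contracted connected graph to make every tree cover exactly $\lambda$ vertices. The only cosmetic difference is in the last phase, where the paper orders the dense trees along a Hamiltonian path in the cube of the contracted graph (Karaganis) rather than chopping a spanning tree, but both yield the same $\bigO(\opt)$ cost and the count $k'\leq k$.
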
 

Before proceeding to the proof of the Theorem above, we show how it implies a proof for Theorem~\ref{thm:capmmtc}.

\begin{proof} [{\bf Theorem~\ref{thm:capmmtc}}]
	We can use Theorem~\ref{thm:capMMTCg} to carry out a binary search for the correct value of $\opt$ - the optimal solution - in the range $[0,\Sigma_{e\in E}\ell(e)]$. For a particular choice of $\opt$, we remove all edges that are of length more than $\opt$ and apply Theorem~\ref{thm:capMMTCg} to each of the connected components. If the total number of trees formed by our algorithm over all components is at most $k$, then we iterate with a guess $\opt/2$ , otherwise with $2\opt$. The correctness follows from Theorem~\ref{thm:capMMTCg}, since our algorithm will create at most $k'$ trees for connected component $V'$, provided optimal solution makes $k'$ trees as well and the guessed value $\opt$ is correct.
\end{proof}  

\begin{definition}
	Given an integer $\lambda > 0$, a tree $T$ is called $\lambda$-good if $|\cov(T)| \geq \lambda/4$.
\end{definition}

\begin{definition}
	{\bf $\lambda$-Steiner Trees.} Given a graph $G(V,E)$ and a subset of vertices $R$ called {terminals},  $\lambda$-Steiner tree is a tree which is a subgraph of $G$ that spans exactly $k$ terminals. The special case of $R=V$ is called $\lambda$-MST.  
\end{definition}
We shall be using the $2$-approximation algorithm for minimum cost $\lambda$-MST from~\cite{Garg05} as a black-box, which implies a $4$-approximation for the minimum cost $\lambda$-Steiner tree problem. Our algorithm has three phases $\calA$, $\calB$ and $\calC$, each making progress towards the proof of Theorem~\ref{thm:capMMTCg}.

\subsection{Algorithm $\calA$ : Constructing $\lambda$-good trees }

Recall that the input is a  graph $G=(V,E)$, with edge lengths $l_e\leq \opt, e\in E$. Further, we assume that there exists a partition of $V$ into $k$ trees such that the cost of each tree is at most $\opt$ and each tree covers at most $\lambda$ vertices (with respect to some assignment). We maintain a set of covered vertices $V_c$ throughout our algorithm. Initially $V_c=\phi$. The following is done iteratively. Pick an arbitrary vertex $v \in V \setminus V_c$ which has not been considered in any previous iteration. Construct a $4$-approximate minimum cost $\ceil*{\lambda/4}$-Steiner Tree in $G$, rooted at $v$ with the terminal set being  $V\setminus V_c$. If the cost of this tree is at most $4\opt$, then add this tree to the set $\lambda_{good}$, add all terminals spanned by this tree to $V_c$. At the termination of this loop, set $V_{bad}=V\setminus V_c$.

\begin{figure}[ht]
	\begin{center}
		\begin{boxedminipage}{5.5in}
			{\bf Input}: Graph $G=(V,E),\lambda$ \\
			{\bf Output}: Set of trees $\lambda_{good} $ and set of vertices $V_{bad}$ \\ 
			
			\sp \sp $V_c \leftarrow \nullset, \lambda_{good}\leftarrow \nullset, V_{temp}\leftarrow V$ \\
			\sp \sp While $V_{temp} \neq \nullset$ \\
			\sp \sp \sp \sp Pick $v\in V_{temp}$ arbitarily \\
			\sp \sp \sp \sp $T \leftarrow \ceil*{\lambda/4}$-Steiner tree rooted at $v$ with terminal set defined as $V\setminus V_c$\\
			\sp \sp \sp \sp $\cov(T) \leftarrow$ terminals in $V\setminus V_c$ covered by $T$ \\
			\sp \sp \sp \sp {\bf If} $\ell(T) \leq 4\opt$ {\bf Then} \\
			\sp \sp \sp \sp \sp \sp Add $T$ to $\lambda_{good} $ \\
			\sp \sp \sp \sp \sp \sp Update $V_{temp}\leftarrow V_{temp}\setminus \cov(T), V_c\leftarrow V_c \cup \cov(T)$ \\
			\sp \sp \sp \sp {\bf Else} \\
			\sp \sp \sp \sp \sp \sp Update $V_{temp}\leftarrow V_{temp}\setminus \{v\}$  \\
			\sp \sp $V_{bad} \leftarrow V\setminus V_c$ \\
			\sp\sp {\bf Return} $\lambda_{good}$ and $V_{bad}$
		\end{boxedminipage}
		\caption{Algorithm $\calA$ }
		\label{alg:AlgA}
	\end{center}
\end{figure}

\begin{lemma}
	\label{lem:AlgA}
	At termination of Algorithm $\calA$, 
	\begin{enumerate}
		\item $V = \cov(\lambda_{good})\cup V_{bad}$, where $\cov(\lambda_{good})=\bigcup_{T \in \lambda_{good}}\cov(T) $
		\item For any tree $T \in \lambda_{good}$, $|\cov(T)| = \ceil{\lambda/4}$ and $\ell(T) \leq 4\opt$
	\end{enumerate} 
\end{lemma}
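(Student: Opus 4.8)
The plan is to derive the lemma from two invariants of the while-loop in Algorithm~$\calA$, both proved by induction on the number of completed iterations. Invariant~(I1): after each iteration, $V_c = \bigcup_{T \in \lambda_{good}} \cov(T)$ and the sets $\cov(T)$ for $T \in \lambda_{good}$ are pairwise disjoint. Invariant~(I2): every tree ever added to $\lambda_{good}$ satisfies $\ell(T) \le 4\opt$ and $|\cov(T)| = \ceil*{\lambda/4}$. The base case holds vacuously because initially $V_c = \lambda_{good} = \nullset$.

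For the inductive step I would examine one iteration, with chosen vertex $v \in V_{temp}$. The variables $V_c$ and $\lambda_{good}$ change only in the ``If'' branch, where one tree $T$ is appended to $\lambda_{good}$ and exactly the vertex set $\cov(T)$ is added to $V_c$. By its definition, $\cov(T)$ is the set of terminals of the current terminal set $V \setminus V_c$ that are spanned by $T$, so $\cov(T) \subseteq V \setminus V_c$ is disjoint from the current $V_c$; together with the inductive hypothesis this preserves both the pairwise disjointness and the identity $V_c = \bigcup_{T \in \lambda_{good}} \cov(T)$, giving~(I1). For~(I2): the guard of the ``If'' branch is exactly $\ell(T) \le 4\opt$, and $T$ is the tree returned by the $4$-approximation for a $\ceil*{\lambda/4}$-Steiner tree on terminal set $V \setminus V_c$, which by definition spans exactly $\ceil*{\lambda/4}$ vertices of $V \setminus V_c$; hence $|\cov(T)| = \ceil*{\lambda/4}$.

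Granting the invariants, the lemma is immediate. Part~2 is~(I2) read off at termination. For Part~1, on exit the algorithm sets $V_{bad} = V \setminus V_c$, and~(I1) gives $\cov(\lambda_{good}) = \bigcup_{T \in \lambda_{good}} \cov(T) = V_c$, so $\cov(\lambda_{good}) \cup V_{bad} = V_c \cup (V \setminus V_c) = V$.

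Since the claim is really bookkeeping about the algorithm, I do not expect a genuine obstacle; the only points worth a sentence are that the loop terminates (each iteration deletes at least one vertex from $V_{temp}$ — all of the non-empty set $\cov(T)$ in the ``If'' branch, or the single vertex $v$ in the ``Else'' branch — so it runs at most $|V|$ times) and that whenever a $\ceil*{\lambda/4}$-Steiner tree is requested the terminal set $V \setminus V_c$ contains $v$ and is assumed to have at least $\ceil*{\lambda/4}$ vertices, with the convention that any smaller remainder is simply placed in $V_{bad}$; neither point disturbs~(I1) or~(I2).
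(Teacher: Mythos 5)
Your proof is correct; the paper in fact gives no proof of this lemma at all, treating both parts as immediate from the algorithm's description, and your loop-invariant argument is exactly the bookkeeping that justifies that. The one substantive point you raise --- what happens when the remaining terminal set $V\setminus V_c$ has fewer than $\ceil*{\lambda/4}$ vertices, so no $\ceil*{\lambda/4}$-Steiner tree exists --- is a genuine omission in the paper's pseudocode, and your convention of letting such leftover vertices fall into $V_{bad}$ is the right reading (it is consistent with how $V_{bad}$ is used in Lemma~\ref{lem:atLeastHalf} and Algorithm~$\calB$).
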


\begin{lemma}
	\label{lem:atLeastHalf}
	For any heavy tree in $\OPT$, say $T$, at least half of the vertices covered by $T$ in $\OPT$, ie. $\cov(T)$ are covered by the union of trees in $\lambda_{good}$
\end{lemma}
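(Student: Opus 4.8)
The plan is to argue by contradiction: suppose $T$ is a heavy tree in $\OPT$ and strictly more than half of its covered vertices, i.e.\ a set $U \subseteq \cov(T)$ with $|U| > |\cov(T)|/2$, lie in $V_{bad}$ at termination of Algorithm $\calA$. Since $T$ is heavy, $|\cov(T)| > \floor*{\lambda/2}$, so $|U| > \floor*{\lambda/2}/2 \geq \ceil*{\lambda/4} - 1$ for appropriate parity handling, hence $|U| \geq \ceil*{\lambda/4}$. The key point is that $U$ is a subset of a single optimal tree $T$, so $T$ restricted to a subtree spanning $U$ (take the Steiner subtree of $T$ on the vertex set $U$) has length at most $\ell(T) \leq \opt$. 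This exhibits a connected subgraph of $G$ containing at least $\ceil*{\lambda/4}$ vertices of $V_{bad}$ and having length at most $\opt$.

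Next I would examine what Algorithm $\calA$ does with the vertices of $U$. Consider the first vertex $v \in U$ that the while-loop picks (some vertex of $U$ must be picked, since the loop only removes a vertex from $V_{temp}$ either by covering it or by explicitly discarding it after picking it, and it runs until $V_{temp}$ is empty). At the iteration when $v$ is picked, all vertices of $U$ are still uncovered — indeed, once any vertex of $U$ enters $V_c$ it is because some good tree covered it, but we are assuming the \emph{entire} set $U$ ends in $V_{bad}$, a contradiction; so at that iteration $U \subseteq V \setminus V_c$, i.e.\ every vertex of $U$ is still a terminal. Therefore the $\ceil*{\lambda/4}$-Steiner tree problem rooted at $v$ with terminal set $V\setminus V_c$ has a feasible solution of cost at most $\opt$, namely the Steiner subtree of $T$ on $\ceil*{\lambda/4}$ vertices chosen from $U$ (this is a subtree of $G$ since $T$ is). The $4$-approximation for $\ceil*{\lambda/4}$-Steiner tree from~\cite{Garg05} then returns a tree of cost at most $4\opt$, so the \textbf{If} branch is taken, $T$ is added to $\lambda_{good}$, and at least one vertex of $U$ (in fact $v$, and all $\ceil*{\lambda/4}$ terminals spanned) enters $V_c$ — contradicting that all of $U$ ends in $V_{bad}$.

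The argument as stated actually proves something slightly stronger than claimed (that not \emph{all} of a large subset can be bad); to get the precise "at least half" statement I would set it up quantitatively: let $B = \cov(T) \cap V_{bad}$ and suppose $|B| > |\cov(T)|/2$. Since $T$ is heavy this forces $|B| \geq \ceil*{\lambda/4}$ by the parity bookkeeping above, and then the same first-picked-vertex argument applies to $B$ in place of $U$: the Steiner subtree of $T$ on any $\ceil*{\lambda/4}$ vertices of $B$ has cost $\leq \opt$ and is available as a feasible solution at the iteration where the first vertex of $B$ is picked, forcing at least one vertex of $B$ into $V_c$, contradiction. The main obstacle I anticipate is the parity/rounding step — making sure that "$|\cov(T)| \geq \floor*{\lambda/2} + 1$ and $|B| > |\cov(T)|/2$" genuinely implies "$|B| \geq \ceil*{\lambda/4}$" in all cases of $\lambda \bmod 4$; this is elementary but needs a careful case check, and it is the only place the specific choice of the fraction $\lambda/4$ (rather than, say, $\lambda/3$) is really being used. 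Everything else is a direct consequence of Lemma~\ref{lem:AlgA}, the feasibility of the optimal tree as a Steiner solution, and the $4$-approximation guarantee.
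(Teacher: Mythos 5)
Your proof is correct and follows essentially the same contradiction argument as the paper: if more than half of $\cov(T)$ ended up in $V_{bad}$, that set would have size at least $\ceil*{\lambda/4}$, and the subtree of $T$ spanning it would certify a feasible $\ceil*{\lambda/4}$-Steiner tree of cost at most $\opt$, so Algorithm $\calA$ would have produced a good tree covering some of these vertices. Your write-up is in fact more careful than the paper's (pinning down the first iteration at which a vertex of $B$ is picked as the root, and checking the rounding), but it is the same proof.
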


\begin{proof}
	Assume otherwise. This implies that at least $\lambda/4$ vertices covered by $T$ are added to $V_{bad}$ - let us call this set $V'$. Now, $\ell(T) \leq \opt$ and there must exist a connected sub-tree of $T$ that covers exactly $\ceil{\lambda/4}$ vertices from $V'$. Hence,  some iteration of Algorithm $\calA$ must have returned a tree of cost at most $4\opt$ covering $\ceil{\lambda/4}$ vertices from $V'$ and this tree was added to $\lambda_{good}$. Consequently, $V'$ cannot be a subset of $V_{bad}$ leading to a contradiction.  
\end{proof}

\subsection{Algorithm $\calB$ : Covering $V$-bad vertices with $\lambda$-good trees}

Recall that in the previous section, we gave algorithm to construct a set of trees $\lambda_{good}$ such that for any tree $T\in \lambda_{good}$, $\ell(T) \leq 4\opt$ and $|\cov(T)| = \ceil*{\lambda/4}$. Further, we extracted a subset of vertices $V_{bad}$ such that no $\ceil{\lambda/4}$-Steiner tree of cost at the most $4\opt$ exists covering these vertices. In this section, we shall give a procedure to cover the vertices in $V_{bad}$, either by constructing new trees that have large size but low cost, or by merging them in to the existing $\lambda$-good trees. In the following procedure, we shall be making use of the 3-approximation algorithm by Khani and Salavatipour~\cite{KhaniS14} for the $\MMTC$ problem. Recall the main theorem from above paper, re-phrased to suit our notations.

\begin{theorem}~\cite{KhaniS14}
	\label{thm:KS}
	Given an undirected graph $G(V,E)$, an integer $k$ and edge lengths $\ell$, if there exists a set of $k$ trees (subgraphs of $G$) $T_1, T_2, \cdots T_{k}$, along with an assignment of vertices to the trees, such that $\max_{i\in [k]} \ell(T_i) \leq \opt$ and $\cup_{i=1}^{k}\cov(T_i)=V$, then there exists a polynomial time algorithm that outputs trees (subgraphs of G) $T'_1, T'_2,\cdots T'_{k'}$, along with an assignment of vertices to the trees such that $\max_{j\in [k']} \ell(T'_{j}) \leq 3\opt$,  $\cup_{j=1}^{k'}\cov(T'_j)=V$ and $k'\leq k$.
\end{theorem}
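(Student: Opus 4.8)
The statement is Khani--Salavatipour's $3$-approximation for (unrooted) $\MMTC$, restated; what follows is a plan for how one would re-derive it (in the simplest form an $\bigO(1)$ bound, with a remark on squeezing the constant to $3$), using only the ingredients already introduced.

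\medskip
\noindent\emph{Step 1: guess, prune, decouple into connected instances.} The theorem already hands us $\opt$ as an upper bound, so the first move is to delete every edge of length $>\opt$. No tree $T_i$ with $\ell(T_i)\le\opt$ can use such an edge, so the hypothesis --- a $k$-tree cover with every tree of length $\le\opt$ --- survives in the pruned graph. The pruned graph splits into connected components, each covered by some $k_i$ of the optimal trees with $\sum_i k_i=k$; since the desired conclusion ($k'\le k$ and every output tree of cost $\bigO(\opt)$) is additive in the $k_i$ and takes a maximum over components, it suffices to handle one component at a time. From now on I assume $G$ is connected, $\ell(e)\le\opt$ for all $e$, and I aim to output at most $k$ subtrees of $G$ that together cover $V$, each of cost $\bigO(\opt)$.

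\medskip
\noindent\emph{Step 2: the spanning-tree lower bound.} The union $\bigcup_{i=1}^{k} T_i$ spans $V$ and has total length $\le k\opt$; since $G$ is connected one can add at most $k-1$ edges of $G$ --- each of length $\le\opt$ by the pruning --- to make this union connected and spanning. Hence $G$ has a spanning tree of length $\le(2k-1)\opt$, so $\ell(\mst(G))\le(2k-1)\opt$. I would then compute an actual minimum spanning tree $\hat T$: a subtree of $G$ with $\ell(\hat T)\le(2k-1)\opt$ and every edge $\le\opt$.

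\medskip
\noindent\emph{Step 3: bottom-up edge-partition of $\hat T$.} Root $\hat T$ arbitrarily and process vertices leaves-first, maintaining at each vertex $v$ a connected ``pending'' subtree $D(v)$ that contains $v$. Initialize $D(v)=\{v\}$; for each child $c$ of $v$ (whose passed-up $D(c)$ already has length $<2\opt$, since it would otherwise have been cut), merge $D(v)\leftarrow D(v)\cup\{(v,c)\}\cup D(c)$, and the instant $\ell(D(v))$ reaches the threshold $2\opt$, output $D(v)$ as a finished piece and reset $D(v)=\{v\}$; after the last child, pass $D(v)$ up to $v$'s parent, and at the root output the final $D(\cdot)$. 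Each merge adds one edge ($\le\opt$) plus a passed-up subtree ($<2\opt$), so a piece, cut the first time it reaches $2\opt$, has length in $[2\opt,5\opt)$; every edge of $\hat T$ ends up in exactly one piece and every vertex lies in at least one piece, so this is a valid tree cover. For the count, every piece except the single topmost (possibly light) one has length $\ge2\opt$, so their number is at most $\ell(\hat T)/(2\opt)\le k-\tfrac{1}{2}$, hence at most $k-1$, giving $\le k$ pieces in all. Outputting these pieces (assigning each vertex to one containing it) proves the statement with the constant $5$ in place of $3$.

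\medskip
\noindent\emph{Main obstacle.} The real difficulty is the last factor: closing the gap between this clean $\bigO(1)$ argument and the stated $3$. The tension is intrinsic --- forcing the piece count down to exactly $k$ needs essentially every piece to have length $\ge2\opt$, while keeping the ratio small needs the overshoot at each cut to be small, and one merge can overshoot by nearly $3\opt$ because a passed-up subtree can itself be almost $2\opt$ long. Reaching $3$ requires a more careful traversal and accounting --- in effect growing the current piece one edge at a time so the overshoot is only $\opt$, while arguing that each piece stays connected across the cuts, and treating the leftover topmost piece and the single-tree components so that the arithmetic $\lceil(2k-1)/2\rceil=k$ remains tight. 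That is precisely where the Khani--Salavatipour argument concentrates its work, and where I would expect to concentrate mine.
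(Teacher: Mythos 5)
This theorem is imported from \cite{KhaniS14} as a black box --- the paper gives no proof of it --- so there is no internal argument to compare yours against. Your sketch is essentially the older Even et al.\ route: prune edges of length $>\opt$, bound $\ell(\mst)$ by $(2k-1)\opt$ via the union of the $k$ optimal trees plus $k-1$ connecting edges, and chop the MST bottom-up at threshold $2\opt$. That argument is correct as far as it goes: the piece count $\le\ell(\mst)/(2\opt)+1\le k$ and the per-piece bound of $5\opt$ both check out, and a $5$-approximation (or Even et al.'s $4$) would in fact suffice everywhere this paper uses Theorem~\ref{thm:KS} --- the thresholds $6\opt$ in IterRefine and the constants in Lemma~\ref{lem:goodTrees} and Theorem~\ref{thm:exact} would simply inflate, leaving the headline $\bigO(1)$ results intact.

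However, you have not proved the statement as written: the factor is $3$, and you correctly flag that your traversal cannot reach it, since one merge can overshoot the $2\opt$ threshold by nearly $3\opt$. Be aware that closing this gap is not a matter of a ``more careful traversal'' of the same kind. Khani and Salavatipour's improvement comes from a structurally different decomposition: they split the spanning forest's edges by weight relative to the guess of $\opt$, decompose into subtrees around the heavy edges, and then pair up the resulting light components via a matching-style combination step so that no output tree is built from more than a bounded amount of light material plus one heavy edge. So your proposal should be read as a correct proof of a weaker theorem (constant $5$) that is adequate for this paper's purposes, together with an accurate diagnosis --- but not a proof --- of where the constant $3$ comes from.
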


We refer to the algorithm referred to in the above theorem as the KS-Algorithm in subsequent sections. As mentioned earlier, although the trees $T'_j$ might share vertices and edges, the set of vertices that such trees {\em cover} forms a partition of $V$. Note that KS algorithm takes an edge weighted graph and number of trees as input. 

\noindent
{\bf A Bipartite Matching Instance:} Our algorithm first constructs a maximum matching instance on a suitably defined bipartite graph $H = (B,Q,F)$. In the course of algorithm, $Q$ remains fixed. However, the other partitite set $B$ and consequently the edge set $F$ is refined iteratively, each time giving rise to a new maximum matching instance. Our invariants will ensure that, either we end up with a perfect matching of $B$ or conclude that there are significantly many light trees in the optimal solution, which need to be handled separately.  

For clarity of notation, we parameterize set $B$ by iteration index $t$ - define $B_t$ to be the partite set $B$ at iteration $t$. A vertex in the set $B_t$ represents a subtree $T_{j,t}, j=1,2,\cdots |B_t|$ such that $\cup_{j=1}^{|B_t|} \cov(T_{j,t})= V_{bad}$. The set $Q$ contains a vertex corresponding to each tree in $T\in \lambda_{good}$. Next, we define the edges in $F_t$ at any iteration $t$. There exists an edge $e=(T_{j,t}, T)\in F_t,T \in Q$ if and only if $d(V(T_{j,t}),V(T)) \leq \opt$. For a fixed $t$, define $\calM_t$ to be a maximum matching instance on the graph $(B_t, Q, F_t)$. For any subset $B'\subseteq B_t$, let $\calN(B')$ be the neighborhood set of $B'$ in $ Q$.  

\noindent
{\bf Algorithm IterRefine.} We initialize by setting $B_0 = V_{bad}$ - each vertex in $B_0$ is a trivial tree containing a single vertex. Now we have the required setup to describe the iterative refinement procedure. At any iteration $t$, we solve the maximum matching instance $\calM_t$. If $\calM_t$ admits a perfect matching of $B_t$, then we terminate. Otherwise we consider the Hall Set $S\subseteq B_t$ with maximum deficiency, that is $\argmax_{S\subseteq B_t}(|S| - |\calN(S)|)$. Let set $S$ contains the subtrees $T_{1,t}, T_{2,t}, \cdots T_{s,t}, s=|S|$. Let $G_{t}=(V_{t},E_{t})$ be defined as: $V_{t}= \cup_{j=1}^{s} \cov(T_{j,t}), E_{t}= \{(u,v)| u,v \in V_{t}\}$ and $\ell(u,v)$ is the length of shortest path between $u$ and $v$ in $G$. We run the KS algorithm on $G_{t}$ with $k=s-1$. Suppose it returns trees $T'_{1,t}, T'_{2,t}, \cdots T'_{s',t}$, $s' \leq s-1$. If cost of each such tree is at most $6T^{*}$, we define $B_{t+1} = (B_t \setminus S) \bigcup_{j=1}^{s'}\{T'_{j,t}\}$, the edge set $F_{t+1}$ accordingly and proceed to the next iteration, else we terminate. 

\begin{figure}[ht]
	\begin{center}
		\begin{boxedminipage}{5.5in}
			{\bf Input }: $\lambda_{good}, V_{bad}$ \\
			{\bf Output}: A maximum matching in $H_t$ and Unmatched Set $\overline{B_{m}}$ \\ 
			
			\sp \sp Initialize the bipartite graph $H_0=(B_0,Q,F_0), t\leftarrow 0$ \\
			\sp \sp {\bf While} there exists a Hall Set in $B_t$ \\
			\sp \sp \sp \sp Pick the Hall Set $S=\argmax_{S\subseteq B_t}(|S|-|\calN(S)|)$ \\
			\sp \sp \sp \sp Apply KS-algorithm on $G_{t}$ with $k=s-1$. Let $T'_{1,t}\cdots T'_{s',t}$ be the output of the algorithm with $s' \leq s-1$.\\
		
			\sp \sp \sp \sp {\bf If} $\max_{i=1}^{s'}\ell(T'_{i,t}) >  6T^{*}$ {\bf Then} STOP \\
			\sp \sp \sp \sp {\bf Else} 
			\{ $B_{t+1}\leftarrow (B_t\setminus S) \bigcup_{j=1}^{s'}\{ T'_{j,t} \}$ \sp  and \sp $t\leftarrow t+1$ \} \\
			\sp\sp {\bf Return} Maximum Matching $\calM_t$ and unmatched subset $\overline{B_{m}}\subseteq B_t$
		\end{boxedminipage}
		\caption{Algorithm IterRefine }
		\label{alg:AlgA}
	\end{center}
\end{figure}

The following lemmas are crucial to prove the correctness of our algorithm. We define the projection of the optimal solution on $S$ as follows.
$$ \restr{\OPT}_{S} = \{T'\in \OPT : \exists T_{i,t}\in S, \cov(T')\cap \cov(T_{i,t}) \neq \nullset\} $$

\begin{lemma}
	\label{lem:ks}
	At any iteration $t$, let the number of trees in $\restr{\OPT}_{S}$ be $m$. Then there exist $m$ trees $T_{i}, i \in [m]$ such that $\cup_{i=1}^{m}V(T_{i})=V_{t}, \ell(T_{i}) \leq 2T^{*}$ and $T_{i}$ is a subtree of $G_{t}$ for $i \in [m]$.
\end{lemma}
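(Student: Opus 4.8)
The plan is to build the $m$ trees by \emph{projecting} the trees of $\restr{\OPT}_{S}$ down onto the vertex set $V_{t}$, using a doubling-and-shortcutting argument to keep the cost within a factor $2$.

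First I would record the structural fact that makes $m$ the right number of trees: every vertex of $V_{t}$ is covered, in the fixed optimal solution, by some tree of $\restr{\OPT}_{S}$. Indeed, if $v\in V_{t}$ then $v\in\cov(T_{j,t})$ for some $T_{j,t}\in S$; letting $T'\in\OPT$ be the unique optimal tree with $v\in\cov(T')$, we get $v\in\cov(T')\cap\cov(T_{j,t})\neq\nullset$, so $T'\in\restr{\OPT}_{S}$. Writing $\restr{\OPT}_{S}=\{T'_{1},\dots,T'_{m}\}$ and setting $W_{i}:=\cov(T'_{i})\cap V_{t}$, this says $\bigcup_{i=1}^{m}W_{i}=V_{t}$; also each $W_{i}\neq\nullset$, since membership of $T'_{i}$ in $\restr{\OPT}_{S}$ forces $\cov(T'_{i})\cap\cov(T_{j,t})\neq\nullset$ for some $j$ and $\cov(T_{j,t})\subseteq V_{t}$.

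Then, for each $i$, I would turn $T'_{i}$ into a subtree $T_{i}$ of $G_{t}$ spanning exactly $W_{i}$. Since $T'_{i}$ is a tree in $G$ with $\ell(T'_{i})\leq\opt$, doubling its edges gives a closed walk of length at most $2\opt$ through all of $V(T'_{i})$, in particular through all of $W_{i}$ (note $W_{i}\subseteq\cov(T'_{i})\subseteq V(T'_{i})$). Traversing this walk and retaining only the first occurrence of each vertex of $W_{i}$ yields an ordering $w_{1},\dots,w_{|W_{i}|}$ of $W_{i}$; let $T_{i}$ be the path $w_{1}-w_{2}-\cdots-w_{|W_{i}|}$ in $G_{t}$, whose edges $(w_{\ell},w_{\ell+1})$ have length $d(w_{\ell},w_{\ell+1})$. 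By the triangle inequality for shortest-path distances, $\ell(T_{i})=\sum_{\ell}d(w_{\ell},w_{\ell+1})$ is bounded by the total length of the doubled walk, i.e.\ $\ell(T_{i})\leq 2\opt$. Because every vertex of $T_{i}$ lies in $W_{i}\subseteq V_{t}$ and $G_{t}$ is the complete graph on $V_{t}$ with shortest-path edge lengths, $T_{i}$ is a subtree of $G_{t}$. Finally $\bigcup_{i=1}^{m}V(T_{i})=\bigcup_{i=1}^{m}W_{i}=V_{t}$, so $T_{1},\dots,T_{m}$ have all the stated properties.

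There is no genuinely hard step here. The only points needing care are to project onto $\cov(T'_{i})\cap V_{t}$ (rather than all of $V(T'_{i})$), so that the output really is a subgraph of $G_{t}$, and to observe that doubling contributes the factor $2$ while shortcutting only decreases length in a metric, so that the bound is exactly $2\opt=2T^{*}$ and not larger.
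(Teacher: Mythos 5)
Your proof is correct and follows essentially the same route as the paper's: double the edges of each optimal tree in $\restr{\OPT}_{S}$, take an Eulerian walk of length at most $2\opt$, and shortcut to the relevant vertices of $V_{t}$ to obtain a path in the complete metric graph $G_{t}$ of length at most $2\opt$. The only (immaterial) difference is that you project onto $\cov(T'_{i})\cap V_{t}$ where the paper uses $V(T'_{i})\cap V_{t}$, and you make explicit the covering argument ($\bigcup_i W_i = V_t$) that the paper leaves implicit.
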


\begin{proof}
    Let $T'_{i},i \in [m]$ be the trees in $\restr{\OPT}_{S}$. Fix a tree, say $T'_{i}$. Consider the graph with each edge of $T'_{i}$ doubled. It is connected and all degrees are even, hence there exists an eulerian walk, say $v_{1},v_{2},\ldots,v_{l}$. Note that vertices in this walk may be repeated. Identify a unique vertex for each $v \in V(T'_{i}) \cap V_{t}$ in this sequence. This defines a subsequence of the walk, say $v_{i_{1}},v_{i_{2}},\ldots,v_{i_{l}}$ with every vertex in $V_{t} \cap V(T'_{i})$ occurring exactly once. This sequence defines a path in $G_{t}$ since $G_{t}$ is a complete graph. The total length of path is at most $2T^{*}$, as length of any edge in $G_{t}$ is the shortest path distance between the end points in $G$ and total length of the walk is at most $2T^{*}$. Repeating this argument for all trees in $\restr{\OPT}_{S}$, we get that vertices in $V_{t}$ can be covered by $m$ trees of length at most $2T^{*}$ each.
\end{proof}

%\begin{proof}
%	We prove the lemma by induction on the number of iteration. Initially, $B_0$ contains singleton trees and hence the lemma holds trivially.
	
%	Let the lemma be true for all iterations till $t$. Now, consider iteration $t+1$. The trees in $B_{t+1}\setminus B_t$ are obtained by running the KS-algorithm on the set of vertices covered by the trees belonging to some Hall Set $S$ in the maximum matching instance $\calM_t$. Consider the trees in $\restr{\OPT}_{S}$ that covers these vertices in $\OPT$. Any such tree has cost at the most $\opt$. Applying Theorem~\ref{thm:KS}, the cost of any tree output by the KS-algorithm is at the most $3\opt$. 
%\end{proof}

\begin{lemma}
	\label{lem:hall-set}
	Suppose $S$ is the maximum deficiency Hall Set, if any, at termination of Algorithm IterRefine. Further, let $k_{\ell}$ be the number of light trees in $\restr{\OPT}_{S}$. Then $|S| - |N(S)| \leq k_{\ell}$.
	
\end{lemma}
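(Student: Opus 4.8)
The plan is to combine Lemma~\ref{lem:ks} with Theorem~\ref{thm:KS} (the KS-algorithm) to derive a contradiction if $|S| - |\calN(S)|$ were to exceed $k_\ell$. The key structural observation is that the vertices in $V_t = \bigcup_{j}\cov(T_{j,t})$ are, on one hand, all \emph{bad} vertices (they descend from $V_{bad}$ through the refinement), and on the other hand they are covered in $\OPT$ by the trees of $\restr{\OPT}_S$, of which $m$ in total, split into $k_\ell$ light ones and $m - k_\ell$ heavy ones. The heavy trees of $\restr{\OPT}_S$ are the lever: by Lemma~\ref{lem:atLeastHalf}, each heavy tree $T'$ has at least half of its $\cov(T')$ vertices covered by $\lambda_{good}$; since $T'$ is heavy, $|\cov(T')| \geq \floor{\lambda/2}+1$, so at least $\approx\lambda/4$ vertices of $\cov(T')$ land in $\cov(\lambda_{good})$, and moreover those vertices lie within distance... wait, more carefully: the good tree(s) covering them are within distance $0$ of $T'$ (they share vertices), and $T'$ itself connects those shared vertices to the bad vertices of $T'$ in $V_t$ with a subtree of length $\leq \opt$. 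Hence every heavy tree in $\restr{\OPT}_S$ contributes a neighbour in $Q$ to the subtree(s) of $S$ that contain its bad vertices.

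First I would make precise the claim that $V_t \subseteq V_{bad}$ and that each vertex of $V_t$ lies in $\cov(T_{j,t})$ for some $T_{j,t} \in S$; this is an invariant of Algorithm IterRefine that should be recorded (each $B_{t+1}$ is obtained by replacing a Hall set's trees with KS-output trees covering exactly the same vertex set). Next, for each heavy tree $T' \in \restr{\OPT}_S$, I would argue that $T'$ has a vertex $w$ with $w \in \cov(\lambda_{good})$, i.e. $w \in V(T)$ for some $T \in \lambda_{good}$: this follows from Lemma~\ref{lem:atLeastHalf} since at least one of the $\geq \lambda/4$ covered-by-good vertices of $T'$ is nonempty (as $\lambda/4 > 0$ when the tree is heavy — one should handle the trivial $\lambda$ small case separately, or note heaviness forces $\lambda \geq 2$). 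Since $T'$ also meets $S$ — it meets some $\cov(T_{i,t})$ with $T_{i,t} \in S$ — and $T'$ is connected with $\ell(T') \leq \opt$, the pair $(T_{i,t}, T)$ is within distance $\opt$ in $G$ (route through $T'$: from a vertex of $T_{i,t}$ in $V(T')$ to $w \in V(T') \cap V(T)$, all inside $T'$, length $\leq \ell(T') \leq \opt$). So $T \in \calN(S)$, and this assignment $T' \mapsto T$ gives a map from the heavy trees of $\restr{\OPT}_S$ into $\calN(S)$. Hence $|\calN(S)| \geq 1$ if there is at least one heavy tree, but that's too weak — I actually need $|\calN(S)| \geq m - k_\ell$, which does \emph{not} follow from this crude map since many heavy trees could map to the same $T$.

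The right way to get the counting is via Lemma~\ref{lem:ks} together with Theorem~\ref{thm:KS}. Here is the argument I would run: suppose for contradiction $|S| - |\calN(S)| > k_\ell$, i.e. $|S| \geq |\calN(S)| + k_\ell + 1$. By Lemma~\ref{lem:ks}, $V_t$ is coverable in $G_t$ by $m$ trees each of length $\leq 2\opt$; the $k_\ell$ light ones among these cover at most $k_\ell \cdot \floor{\lambda/2}$ vertices of $V_t$ in $\OPT$. The remaining $m - k_\ell$ heavy trees: I want to say their bad-vertex portions can be \emph{re-routed} so that, after attaching to good trees, they become part of a covering that uses only $|\calN(S)|$ "slots". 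Concretely, restrict $G_t$ to the vertex set $V_t' := V_t \setminus (\text{vertices covered in } \OPT \text{ by the light trees of } \restr{\OPT}_S)$ — actually cleaner: build an auxiliary instance on $V_t$ plus the good trees touched by $S$, run KS with parameter $|\calN(S)|$... Hmm. Let me instead follow what the algorithm actually does: IterRefine ran KS on $G_t$ with $k = s - 1$ and it did \emph{not} succeed in dropping below, i.e., the Hall set persisted. The cleanest contradiction: if $|S| - |\calN(S)| > k_\ell$, then consider covering $V_t$ using the $m$ trees from Lemma~\ref{lem:ks}; the heavy ones each touch $\calN(S)$, and by a flow/Hall argument on the bipartite graph restricted to $S \cup \calN(S)$ the heavy trees together with $\calN(S)$ can absorb $V_t$ using at most... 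I expect the clean statement to be: the vertices of $V_t$ \emph{not} charged to light $\OPT$-trees can be covered by $\leq |\calN(S)|$ trees of length $\bigO(\opt)$ that each share a vertex with $\calN(S)$, because each such vertex sits on a heavy $\OPT$-tree which reaches into some member of $\calN(S)$; so merging, the total covering of $V_t$ uses $\leq |\calN(S)| + k_\ell \leq |S| - 1$ trees, contradicting the fact that KS on $G_t$ with $k = s-1 = |S|-1$ failed to terminate the Hall condition.

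\textbf{The main obstacle} I anticipate is exactly the step bridging "each heavy $\OPT$-tree individually reaches $\calN(S)$" and "the heavy parts collectively need only $|\calN(S)|$ trees": the naive per-tree map into $\calN(S)$ is not injective, so I will need to invoke the \emph{maximum-deficiency} choice of $S$ (Hall's theorem gives that $S$ is a maximal deficient set, which constrains how neighbourhoods of subsets of $S$ behave) or set up an explicit max-flow on the bipartite graph $(S, \calN(S))$ to show that $\OPT$'s own partition of $V_t$ witnesses a fractional/integral assignment of $V_t$'s mass to $\calN(S)$-slots plus $k_\ell$ extra slots. In other words, $\OPT$ restricted to $V_t$ gives a feasible way to cover $V_t$ with $|\calN(S)| + k_\ell$ "bounded trees" (each good tree in $\calN(S)$ acting as an anchor for the heavy $\OPT$-trees hanging off it, plus $k_\ell$ free-standing light trees), and bounded-tree-covering with so few trees contradicts that the Hall set $S$ of size $> |\calN(S)| + k_\ell$ survived KS with $k=s-1$. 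Making the constants line up ($2\opt$ from Lemma~\ref{lem:ks}, $3\times$ blowup from KS giving $6\opt$, matching the $6\opt$ threshold in IterRefine) is routine once the combinatorial core is in place.
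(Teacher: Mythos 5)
You have the right ingredients (Lemma~\ref{lem:ks}, the KS-termination contradiction, and the fact that every heavy tree of $\restr{\OPT}_{S}$ reaches into $\calN(S)$ via Lemma~\ref{lem:atLeastHalf}), but you stop exactly at the step the paper resolves, and the workaround you sketch does not go through. The paper splits the bound into two independent inequalities: (i) $|S| \leq k_h + k_\ell$, which follows directly from Lemma~\ref{lem:ks} --- if $|S| > m = k_h+k_\ell$ then $V_t$ admits a cover by $m \leq s-1$ trees of length $\leq 2\opt$, so KS with $k=s-1$ returns trees of cost $\leq 6\opt$ and IterRefine would not have terminated; and (ii) $|\calN(S)| \geq k_h$. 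For (ii) the resolution of your ``non-injective map'' obstacle is not a flow or maximum-deficiency argument but a one-line \emph{vertex-counting} pigeonhole: the sets $\cov(T')$ for distinct heavy trees $T'\in\restr{\OPT}_{S}$ are pairwise disjoint (OPT's assignment is a partition), each such $T'$ places at least half of its $>\floor{\lambda/2}$ covered vertices, i.e.\ at least $\ceil{\lambda/4}$ of them, into good trees that all lie in $\calN(S)$, and each good tree covers \emph{exactly} $\ceil{\lambda/4}$ vertices by Lemma~\ref{lem:AlgA}. So the good trees in $\calN(S)$ absorb at least $k_h\cdot\ceil{\lambda/4}$ distinct vertices while each can hold only $\ceil{\lambda/4}$, forcing $|\calN(S)|\geq k_h$. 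You count trees where you should count vertices; injectivity of the map $T'\mapsto T$ is never needed.

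Your alternative route --- covering $V_t$ by $|\calN(S)|+k_\ell$ trees by merging all heavy OPT-trees anchored to a common good tree --- has a concrete defect beyond being unfinished: to contradict the failure of KS with $k=s-1$ you need a cover of $V_t$ by at most $s-1$ trees \emph{each of length at most $2\opt$} (so that the $3$-approximation lands under the $6\opt$ threshold). If several heavy OPT-trees share the same anchor in $\calN(S)$, the merged tree has length growing linearly in the number of merged pieces, so the required per-tree cost bound is lost. The paper's decomposition avoids this entirely because inequality (i) already uses only the $m$ unmerged trees of Lemma~\ref{lem:ks}, and inequality (ii) is pure counting with no tree construction at all.
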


\begin{proof}
	Let $k_h$ be the number of heavy trees in $\restr{\OPT}_{S}$. We first claim that $|S| \leq k_h + k_{\ell}$. Assume otherwise. But then,  by Lemma \ref{lem:ks}, there exists a partition of $S$ into exactly $k_h + k_{\ell}$ trees in $G_{t}$ with cost at most $2T^{*}$, by Theorem~\ref{thm:KS}, the KS-algorithm applied on the Hall Set $S$ with $k=|S|-1$ must have produced trees with cost at most $6T^{*}$ and the algorithm would not have terminated. 
	
	Now, we prove that $|N(S)| \geq k_h$. Consider any heavy tree in $\restr{\OPT}_{S}$, say $\hatT$. By Lemma~\ref{lem:atLeastHalf}, at least half of the vertices in $\hatT$ are covered by $\lambda$-good trees. By construction of the bipartite graph $(B_t, Q)$, all such $\lambda$-good trees belong to the set $N(S)$, since any two vertices in the same tree of the optimal solution can be at the most $\opt$ distance away from each other. Thus, adding up over all heavy trees in $\restr{\OPT}_{S}$, the total number of vertices covered by the $\lambda$-good trees in $N(S)$ is at least $k_h\cdot \ceil{ \lambda/4}$. But, each $\lambda$-good tree in $N(S)$ covers exactly $\ceil{\lambda/4}$ vertices. Hence, by pigeon-hole principle, $|N(S)| \geq k_h$. 
	Combining the above two facts, we have $|S| - |N(S)| \leq k_{\ell}$.
\end{proof}

\begin{corollary}
	\label{cor:hall-set}
	Let $t'$ be the final iteration of the Algorithm IterRefine. Then, the total number of unmatched trees in $B_{t'}$ is at most $k_{light}$.
\end{corollary}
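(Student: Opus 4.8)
The plan is to reduce the claim to Lemma~\ref{lem:hall-set} via the deficiency (König / Defect-Hall) form of Hall's theorem. In the bipartite graph $H_{t'}=(B_{t'},Q,F_{t'})$ present at the final iteration, the size of a maximum matching is $|B_{t'}| - \max_{S'\subseteq B_{t'}}\bigl(|S'|-|\calN(S')|\bigr)$; equivalently, the number of trees of $B_{t'}$ left unmatched by $\calM_{t'}$ is exactly the maximum deficiency $\max_{S'\subseteq B_{t'}}\bigl(|S'|-|\calN(S')|\bigr)$.

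First I would handle the trivial case: if Algorithm IterRefine terminates because $\calM_{t'}$ is a perfect matching of $B_{t'}$, there are no unmatched trees and $0\le k_{light}$ holds vacuously. Otherwise the algorithm exited through the STOP branch, so $B_{t'}$ admits no perfect matching and the maximum-deficiency Hall set $S=\argmax_{S'\subseteq B_{t'}}\bigl(|S'|-|\calN(S')|\bigr)$ is well defined, with deficiency at least $1$. By the observation above, the number of unmatched trees in $B_{t'}$ equals $|S|-|\calN(S)|$.

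Now I would apply Lemma~\ref{lem:hall-set} to this very set $S$ (it is exactly the maximum-deficiency Hall set at termination, which is the hypothesis of that lemma): this yields $|S|-|\calN(S)|\le k_\ell$, where $k_\ell$ is the number of light trees in $\restr{\OPT}_{S}$. Since $\restr{\OPT}_{S}$ is, by definition, a subcollection of the trees of the fixed optimal solution, the number of light trees among them is at most the total number $k_{light}$ of light trees in $\OPT$. Chaining, the number of unmatched trees in $B_{t'}$ is $|S|-|\calN(S)|\le k_\ell\le k_{light}$, which proves the corollary.

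I do not anticipate a real obstacle here: this is essentially a bookkeeping consequence of Lemma~\ref{lem:hall-set}. The one subtlety worth spelling out is that the Defect-Hall theorem must be invoked on the side $B_{t'}$ that we are trying to match (not on $Q$), and that the set realizing the maximum deficiency on that side is literally the set $S$ to which Lemma~\ref{lem:hall-set} applies, so that the two ``maximum deficiency'' quantities coincide rather than merely being comparable.
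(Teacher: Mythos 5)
Your proposal is correct and takes essentially the same route as the paper: both invoke the defect form of Hall's theorem to equate the number of unmatched trees in $B_{t'}$ with the maximum deficiency $|S|-|\calN(S)|$, then apply Lemma~\ref{lem:hall-set} and the observation that $k_\ell \le k_{light}$. The only difference is cosmetic --- you spell out the trivial perfect-matching case and the side on which the deficiency is computed, which the paper leaves implicit.
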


\begin{proof}
	We use the fact that the size of the maximum matching in $B_{t'}$ is exactly equal to $|B_{t'}| - (|S| - |N(S)|)$, where $S$ is the maximum deficiency Hall Set~\cite{Diestel12}. Hence, the total number of unmatched trees in $B_{t'}$ is at exactly $|S|-|N(S)| \leq k_{\ell} \leq k_{light}$, by Lemma~\ref{lem:hall-set}.
\end{proof}

We proceed to the final phase of our algorithm, equipped with the above Lemmas. Recall that, there are two types of partitions in $B_{t'}$ - either matched or unmatched by $\calM_{t'}$. First, let us consider the matched partitions - call this set $B_{m}$. For each tree $T$ in $B_m$, we take its union with the $\lambda$-good tree in $Q$, say $\hatT$ that it is matched to in $\calM_{t'}$. More formally, we build a tree $T'$ by connecting $T,\hatT$ using the shortest path between the closest pair of vertices in $T$ and $\hatT$. We set $\cov(T')=\cov(T) \cup \cov(\hatT)$. We remove $\hatT$ and add the newly constructed tree $T'$ to the set $\lambda$-good. 

Finally, we consider the set of unmatched trees $\overline{B_{m}}$. For each such tree, we add $\ceil*{\lambda/2}$ dummy vertices co-located with any arbitrary vertex in the tree. We add this tree to $\lambda_{good}$. This completes the description of the algorithm.

\begin{lemma}
	\label{lem:goodTrees}
	At termination, Algorithm $\calB$ returns a set of trees $\lambda_{good}$ that 
	\begin{enumerate}
		\item covers all vertices in the  set $V \cup V_{dummy}$.
		\item The cost of any tree in $\lambda_{good}$ is at most $11\opt$. 
		\item $|V \cup V_{dummy}| \leq k\lambda$
	\end{enumerate} 
\end{lemma}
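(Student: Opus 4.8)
The plan is to verify the three claims by tracking exactly what happens to the trees and the vertex set as Algorithm $\calB$ transforms the output of Algorithm $\calA$. For part (1), I would argue that every vertex of $V$ lies in some tree in $\lambda_{good}$ at the end. By Lemma~\ref{lem:AlgA}, after Algorithm $\calA$ we have $V = \cov(\lambda_{good}) \cup V_{bad}$. The set $B_{t'}$ produced by IterRefine always satisfies the invariant $\bigcup_{T \in B_{t'}} \cov(T) = V_{bad}$ (this holds for $B_0 = V_{bad}$ and is preserved whenever a Hall set $S$ is replaced by the KS-output, since the KS-algorithm re-covers $V_t = \bigcup_{T_{j,t} \in S} \cov(T_{j,t})$). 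In the final phase, each matched tree $T \in B_m$ is merged into its partner $\hat T \in Q \subseteq \lambda_{good}$, and each unmatched tree in $\overline{B_m}$ is itself added to $\lambda_{good}$; thus all of $V_{bad}$, hence all of $V$, together with the freshly introduced $V_{dummy}$, is covered. This step is essentially bookkeeping.

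For part (2), I would bound the cost of each tree in the final $\lambda_{good}$. Trees that were already in $\lambda_{good}$ (from Algorithm $\calA$) and never touched have cost $\le 4\opt$. A tree $T \in B_{t'}$ is either an original singleton (cost $0$) or a KS-output tree from some iteration, which by the termination condition of IterRefine has cost $\le 6\opt$. When we merge such a $T$ with its matched partner $\hat T$ (cost $\le 4\opt$) we add a shortest path between the closest pair of vertices in $T$ and $\hat T$; by the definition of the edge set $F_{t'}$, the existence of the matching edge $(T,\hat T)$ means $d(V(T),V(\hat T)) \le \opt$, so the connecting path has length $\le \opt$. Hence the merged tree has cost $\le 6\opt + \opt + 4\opt = 11\opt$. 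Unmatched trees in $\overline{B_m}$ get only dummy vertices added (co-located, contributing zero length), so their cost stays $\le 6\opt$. Every case is therefore $\le 11\opt$.

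For part (3), I would count dummy vertices. We add $\ceil*{\lambda/2}$ dummies per unmatched tree in $\overline{B_m}$, and by Corollary~\ref{cor:hall-set} there are at most $k_{light}$ such trees, so $|V_{dummy}| \le k_{light}\cdot\ceil*{\lambda/2}$. On the other hand, $|V| \le k\lambda$ is given by the hypothesis of Theorem~\ref{thm:capMMTCg} (the optimal solution has $k$ trees each covering at most $\lambda$ vertices). Splitting $k = k_{heavy} + k_{light}$, a heavy tree covers at most $\lambda$ vertices and a light tree at most $\floor*{\lambda/2}$, so $|V| \le k_{heavy}\lambda + k_{light}\floor*{\lambda/2}$. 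Adding the dummies, $|V \cup V_{dummy}| \le k_{heavy}\lambda + k_{light}\floor*{\lambda/2} + k_{light}\ceil*{\lambda/2} = k_{heavy}\lambda + k_{light}\lambda = k\lambda$, as required.

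The main obstacle is part (3), and specifically making sure the bound $|V| \le k_{heavy}\lambda + k_{light}\floor*{\lambda/2}$ is used rather than the weaker $|V| \le k\lambda$ — the whole point of introducing exactly $\ceil*{\lambda/2}$ dummies per light tree is that it exactly fills the slack left by the light trees of $\OPT$, so the telescoping $\floor*{\lambda/2} + \ceil*{\lambda/2} = \lambda$ is what keeps the total at $k\lambda$. One subtlety to check carefully is that Corollary~\ref{cor:hall-set} counts unmatched trees in $B_{t'}$ against $k_{light}$ (the global number of light trees in $\OPT$), not against $k_\ell$ restricted to a single Hall set, so the dummy count is correctly bounded even though different iterations may involve different Hall sets; this is exactly what the corollary delivers, so no further argument is needed there.
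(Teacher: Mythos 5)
Your proposal is correct and follows essentially the same route as the paper: coverage via the invariant that $B_t$ always covers $V_{bad}$, the $6\opt+\opt+4\opt=11\opt$ accounting for merged trees, and the $k_{heavy}\lambda + k_{light}\floor*{\lambda/2} + k_{light}\ceil*{\lambda/2} = k\lambda$ count for part (3). Your version is in fact slightly cleaner in attributing the $6\opt$ bound on trees in $B_{t'}$ to the termination test of IterRefine (rather than to Lemma~\ref{lem:ks} as the paper loosely does) and in handling the floor/ceiling bookkeeping explicitly.
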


\begin{proof}
	First we show that all trees in $\lambda_{good}$ are indeed $\lambda$-good, that is, they cover at least $\lambda/4$ vertices. Recall that, after algorithm $\calA$, the set $Q$ contains only $\lambda$-good trees, by Lemma~\ref{lem:AlgA}. At termination of algorithm $\calB$, each tree is either formed by joining of a tree in the set $Q$ with some tree in $B_m$. Hence, the tree continues to be $\lambda$-good. The only other type of tree in $\lambda_{good}$ are formed by adding $\ceil*{\lambda/2}$ dummy vertices to an unmatched tree in $\overline{B_{m}}$. Hence, they are $\lambda$-good by construction. Property 1 follows from Lemma~\ref{lem:AlgA} and the fact that cover of trees in $B_{t'}$ is a partition of the vertices in $V_{bad}$. Next we prove Property 2. There are two types of trees in $\lambda_{good}$. Let us consider any tree that is formed by the union of some matched tree in $B_{m}$, say $\calT$ and some already existing $\lambda$-good tree in $Q$, say $\hatT$. By Lemma~\ref{lem:AlgA}, $\ell(\hatT) \leq 4\opt$. By Lemma~\ref{lem:ks}, $\ell(\calT)\leq 6\opt$. A tree is formed by joining these two trees by a path of cost at the most $\opt$. So, the resultant tree has cost at the most $(4+6+1)\opt = 11\opt$. The second kind of tree corresponds to the unmatched vertices in $B_{t'}$ and hence their cost is bounded above by $3\opt$, by Lemma~\ref{lem:ks}. Addition of dummy vertices does not affect the cost at all.
	
	Property 3 can be proved as follows. Recall that $|V| \leq k_{heavy}\lambda + k_{light}\lambda/2 $. Now, we add $\lambda/2$ dummy vertices to each unmatched partition in $B_{t'}$, which, by Corollary~\ref{cor:hall-set}, can be at the most $k_{light}\lambda/2$. Hence,
	$$|V\cup V_{dummy}| \leq k_{heavy}\cdot\lambda + 2\cdot k_{light}\cdot \lambda/2 \leq k\lambda$$ \end{proof}

The above lemma guarantees that the total number of vertices in the graph is at the most $k\lambda$. For technical reasons that would be clear in Section~\ref{sec:exact}, we add some more dummy vertices co-located with an arbitarily selected $v\in V$, such that the total number of vertices is an integer multiple of $\lambda$. We note at this point that we do not make any claim about the number of trees in the set $\lambda_{good}$. In fact, the number can be larger that $k$, the original budget. However, in the next Section, we show how to refine the trees such that number of trees is at most $k$, without increasing the cost by more than a constant factor and maintaining the capacity constraints. Also note that all our procedures are polynomial time.

\subsection{Algorithm $\calC$ : Converting $\lambda$-good trees to feasible trees}
\label{sec:exact}

In the previous section, we gave algorithms that ensure the property that each tree covers at least $\lambda/4$ vertices and the cost of each tree is bounded by $\bigO(\opt)$. In this section, we prove that it is possible to construct a solution that creates at most $k$ admissible trees given the previous solution, where $k$ is the number of trees that the optimal solution uses to cover the vertices in the connected component $G=(V,E)$. We prove the following theorem in this section.

\begin{theorem}
	\label{thm:exact}
	Let $T_{1},T_{2},\ldots,T_{m}$ be trees returned by the Algorithms $\calA$ and $\calB$ such that $|\cov(T_{i})| \geq \lambda / c$, where $c\geq 1$, $\ell(T_{i}) \leq \alpha T^{*}$ for $1 \leq i \leq m$. If $\lambda$ divides $|\cup_{i=1}^{m}\cov(T_{i})|$, then there exist trees $T'_{1},T'_{2},\ldots,T'_{l}$ spanning $V$ such that $|\cov(T'_{i})|=\lambda, \ell(T'_{i}) \leq (6c+6c \alpha-2\alpha -6) T^{*}$ for $1 \leq i \leq l$. Further, such trees can be found in polynomial time.
\end{theorem}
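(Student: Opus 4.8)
\textbf{Proof proposal for Theorem~\ref{thm:exact}.}

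The plan is to set up a contraction-and-redistribution scheme on the graph $\tilde{G}$ obtained from $G$ by contracting each tree $T_i$ to a single super-node. Since $G$ is connected and every edge has length at most $\opt$, the contracted graph $\tilde{G}$ is connected and every edge still has length at most $\opt$. Compute a spanning tree $\mathcal{S}$ of $\tilde{G}$; its edges have length at most $\opt$ each. Now root $\mathcal{S}$ at an arbitrary node. The idea is to process $\mathcal{S}$ bottom-up, accumulating the covered vertices of the super-nodes (each super-node $T_i$ carries a ``weight'' $|\cov(T_i)| \in [\lambda/c, \lambda]$), and to cut off a chunk of $\mathcal{S}$ whenever the accumulated number of covered vertices first reaches or exceeds $\lambda$. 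Because each super-node contributes at most $\lambda$ vertices, whenever we cross the threshold we have accumulated between $\lambda$ and $2\lambda - 1$ vertices in a connected sub-forest of $\mathcal{S}$; this sub-forest, when expanded back to $G$, is a connected subgraph whose length is at most (number of constituent $T_i$'s)$\cdot \alpha\opt$ plus (number of connecting $\mathcal{S}$-edges)$\cdot\opt$.

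First I would bound the number of super-nodes that can be bundled into one chunk: since each carries at least $\lambda/c$ covered vertices and the chunk holds at most $2\lambda-1$ vertices before being cut, a chunk contains at most $2c-1$ super-nodes, hence at most $2c-2$ internal connecting edges. So the length of the connected subgraph of $G$ underlying a chunk is at most $(2c-1)\alpha\opt + (2c-2)\opt$. Next, a chunk with between $\lambda$ and $2\lambda-1$ covered vertices must be split into exactly the right number of $\lambda$-sized pieces: take a doubled Euler tour of the chunk's connected subgraph (length at most $2[(2c-1)\alpha\opt + (2c-2)\opt]$), walk along it, and cut whenever exactly $\lambda$ covered vertices have been seen — standard partitioning along a tour, as in the $\MMTC$ literature and as used in Lemma~\ref{lem:ks}. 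A chunk holding $q\lambda + r$ vertices with $0 \le r < \lambda$ (and here $q \in \{1\}$ so $r$ can be anything in $[0,\lambda)$) — actually the cleaner route is to defer the ``leftover'' $r$ vertices of a chunk to the next chunk up the tree, which is possible because $\mathcal{S}$ is connected, so globally we only ever carry a leftover remainder forward. Since $\lambda \mid |\bigcup_i \cov(T_i)|$, the process terminates with no remainder and produces exactly $l = |\bigcup_i\cov(T_i)|/\lambda$ trees, each covering exactly $\lambda$ vertices.

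For the cost bound on each output tree $T'_i$: each $T'_i$ is a connected piece carved from a chunk's doubled Euler tour, together with possibly one ``carried-over'' fragment from the chunk below, linked by one more $\mathcal{S}$-edge of length $\le\opt$. Pushing through the arithmetic — a piece of the tour has length at most the whole doubled chunk $2[(2c-1)\alpha + (2c-2)]\opt$, and one may need to glue in the straggler plus a connecting edge — should land at $(6c + 6c\alpha - 2\alpha - 6)\opt$; I would reverse-engineer the exact chunk-size and carry-over bookkeeping to hit precisely that constant (note $6c+6c\alpha-2\alpha-6 = 6(c-1)(1+\alpha) + 4\alpha\cdot\text{(something)}$-type regrouping suggests the chunk can momentarily hold up to $3c-\text{const}$ super-nodes across a boundary, so I would allow a chunk to absorb the full leftover of its child rather than re-cutting, which roughly triples the per-chunk super-node count and matches the $6c$ coefficient). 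All steps — contraction, spanning tree, bottom-up chunking, Euler-tour splitting — are polynomial.

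\textbf{Main obstacle.} The delicate part is the bookkeeping that simultaneously (a) keeps every output tree connected in $G$, (b) makes every output tree cover \emph{exactly} $\lambda$ vertices, and (c) keeps the length within the stated constant. Handling the ``remainder'' vertices when a chunk's count is not a multiple of $\lambda$ is what forces the leftover-carry-forward mechanism, and getting the worst-case accumulation of super-nodes across a chunk boundary right (so that the constant is exactly $6c+6c\alpha-2\alpha-6$ and not larger) is where the real care is needed; the divisibility hypothesis $\lambda \mid |\bigcup_i\cov(T_i)|$ is exactly what guarantees this carry-forward closes up with zero remainder at the end.
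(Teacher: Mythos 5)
Your high-level plan --- contract each $T_i$ to a super-node, take a spanning tree of the contracted graph, and chop it bottom-up by accumulated cover --- is a genuinely different route from the paper's. The paper linearizes the problem instead: it forms the unweighted adjacency graph $G_{P}$ on the trees (edge when two trees are within distance $\opt$), invokes Karaganis's theorem that the cube of a connected graph has a Hamiltonian path, orders the trees along that path so that consecutive trees are within $(2\alpha+3)\opt$ of each other, greedily merges $c$ consecutive trees to reach cover at least $\lambda$ (Lemma~\ref{big}), and then runs a one-dimensional induction along the path that peels off exactly-$\lambda$ pieces from each tree and hands the remainder $p<\lambda$ to its successor (Lemma~\ref{exact}).

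The problem is that your writeup has a genuine gap, and you flag it yourself: the entire bookkeeping that makes each output tree simultaneously connected, exactly $\lambda$-covering, and within the stated cost is deferred (``I would reverse-engineer the exact chunk-size and carry-over bookkeeping to hit precisely that constant''). This difficulty is not cosmetic. A spanning tree of the contracted graph branches, so a single node of $\mathcal{S}$ can receive leftover fragments from many children at once; you must then bundle several sub-$\lambda$ leftovers into connected chunks through the parent super-node, and each such leftover is itself a segment of a doubled Euler tour of a chunk that may already contain up to $c$ super-nodes and their connectors. Controlling how many super-nodes and connecting edges can accumulate in one output tree under this recursion --- and hence its length --- is exactly the hard part, and it is not done; your ``carry one remainder forward'' picture is only valid on a path. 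Two smaller issues: you assume $|\cov(T_i)|\leq\lambda$, which the theorem does not grant (trees produced by Algorithm $\calB$ from the matched set can cover far more than $\lambda$ vertices, so a single super-node may already need to be split into several exactly-$\lambda$ pieces, a case the paper's Lemma~\ref{exact} handles explicitly), and the specific constant $(6c+6c\alpha-2\alpha-6)$ is asserted rather than derived. The Hamiltonian-path device is precisely what lets the paper avoid branching altogether and make the remainder-passing induction clean; without it, or an equivalent resolution of the branching case, your argument is incomplete.
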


Define $G_{P}$ be to be an unweighted graph with $T_{1},T_{2},\ldots,T_{m}$ as vertices. There is an edge $(T_{i},T_{j})$ if $d(T_{i},T_{j})=d(V(T_{i}),V(T_{j})) \leq T^{*}$. 

\begin{claim}
	$G_{P}$ is connected. Further, the implicit cost of each unweighted edge in $G_p$ is at most $\opt$.
\end{claim}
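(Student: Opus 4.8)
The claim has two parts, and I would prove them in sequence, both relying on the structural assumptions carried over from the beginning of Section~\ref{sec:capacitated}: the working graph $G$ is connected and every edge $e \in E$ has $\ell(e) \le \opt$.

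\textbf{Connectivity of $G_P$.} The plan is to argue by contradiction. Suppose $G_P$ is disconnected, and let $\calV_1 \subsetneq \{T_1, \dots, T_m\}$ be the vertex set of one connected component of $G_P$, with $\calV_2$ the (non-empty) remaining trees. Let $U_1 = \bigcup_{T \in \calV_1} V(T)$ and $U_2 = \bigcup_{T \in \calV_2} V(T)$. Since Algorithms $\calA$ and $\calB$ together produce trees whose covers partition $V \cup V_{dummy}$ and whose spanned vertices lie in $V$ (dummy vertices are co-located with real ones), we have $U_1 \cup U_2 = V$ and both are non-empty. Because $G$ is connected, there is an edge $(u,v) \in E$ with $u \in U_1$, $v \in U_2$; say $u \in V(T_i)$ with $T_i \in \calV_1$ and $v \in V(T_j)$ with $T_j \in \calV_2$. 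Then $d(V(T_i), V(T_j)) \le d(u,v) = \ell(u,v) \le \opt$, so by definition $(T_i, T_j)$ is an edge of $G_P$, contradicting the fact that $\calV_1, \calV_2$ lie in different components. Hence $G_P$ is connected.

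\textbf{Implicit cost of each edge.} For the second part, recall that an edge $(T_i, T_j)$ of $G_P$ is meant to stand for the operation of joining $T_i$ and $T_j$ via a shortest path in $G$ between their closest pair of vertices; its implicit cost is $d(V(T_i), V(T_j))$. By the defining condition of $G_P$, such an edge exists only when $d(V(T_i), V(T_j)) \le \opt$, so the implicit cost is at most $\opt$ by construction. (One could equivalently note that the shortest $u$–$v$ path realizing this distance consists of edges each of length $\le \opt$, but the bound we actually need — total implicit cost $\le \opt$ — is immediate from the definition.)

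\textbf{Main obstacle.} Neither part is technically deep; the only thing to be careful about is the bookkeeping in the connectivity argument — namely that the vertex sets $V(T_i)$ of the trees returned by $\calA$ and $\calB$ genuinely cover all of $V$ (so that the crossing edge of $G$ guaranteed by connectivity has both endpoints inside some tree), and that dummy vertices, being co-located with genuine vertices of $G$, do not interfere with distances. This follows from Lemma~\ref{lem:AlgA}(1) and Lemma~\ref{lem:goodTrees}(1), together with the observation that every tree in $\lambda_{good}$ is a subgraph of $G$ on genuine vertices. So the real content is just invoking connectivity of $G$ and the edge-length bound $\ell(e) \le \opt$; I expect the write-up to be only a few lines.
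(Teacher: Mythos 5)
Your proof is correct and follows essentially the same route as the paper, which disposes of the claim in one line by noting that Algorithms $\calA$ and $\calB$ are run on a connected component of $\hat{G}$ in which every edge has length at most $\opt$; you have simply filled in the details (the crossing-edge argument for connectivity, and the observation that the edge condition $d(V(T_i),V(T_j))\leq \opt$ directly bounds the implicit cost). No gaps.
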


\begin{proof}
	The claim follows from the fact that we run Algorithms $\calA$ and $\calB$ only on connected components of $\hat{G}$ formed after removing all edges that are of length more than $\opt$ from $G$.
\end{proof}

We will need the following theorem by Karaganis \cite{karaganis1968cube} to prove our result:
\begin{theorem} \cite{karaganis1968cube}
	Let $G=(V,E)$ be a connected simple unweighted graph. Define $G^{3}$ to be the graph on the same set of vertices and set of edges 
	$$E^3 = \{(u,v)| \text{shortest path distance between u and v in G is at the most 3}\}$$
	Then, $G^3$ has a Hamiltonian path.
\end{theorem}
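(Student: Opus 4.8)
The plan is to first reduce the statement to the case of trees, and then prove it for trees by a carefully strengthened induction. For the reduction, let $T$ be any spanning tree of the connected graph $G$. Since $T$ is a subgraph of $G$, the shortest path distance in $G$ between two vertices is at most their distance in $T$; hence every edge of $T^3$ is also an edge of $G^3$, i.e.\ $T^3$ is a subgraph of $G^3$ on the same vertex set. Therefore it suffices to exhibit a Hamiltonian path in $T^3$. (If $|V(G)| \le 2$ the claim is trivial, so assume $|V(G)| \ge 3$, and in particular $T$ has at least one edge.)

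The key step is to strengthen the statement before inducting: I will prove that for every tree $T$ with at least two vertices and every edge $uv \in E(T)$, the cube $T^3$ contains a Hamiltonian path whose two endpoints are exactly $u$ and $v$. This is proved by induction on $|V(T)|$; the base case $|V(T)| = 2$ is the single edge $uv$, which is itself the required path. For the inductive step, delete the edge $uv$ from $T$, obtaining two subtrees $T_u$ (containing $u$) and $T_v$ (containing $v$) with $V(T_u) \cup V(T_v) = V(T)$. Because $T$ is a tree, the unique $T$-path between two vertices of $T_u$ never crosses the deleted edge, so distances inside $T_u$ agree with distances in $T$ between those vertices, and likewise for $T_v$; consequently $T_u^3$ and $T_v^3$ are subgraphs of $T^3$.

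Now I would split into three cases according to the sizes of $T_u$ and $T_v$. If both have at least two vertices, pick a neighbour $u'$ of $u$ inside $T_u$ and a neighbour $v'$ of $v$ inside $T_v$; by the induction hypothesis $T_u^3$ has a Hamiltonian path $P_u$ from $u$ to $u'$, and $T_v^3$ has a Hamiltonian path $P_v$ from $v$ to $v'$. The unique $u'$--$v'$ path in $T$ is $u', u, v, v'$, of length exactly $3$, so $u'v' \in E(T^3)$; concatenating $P_u$ (from $u$ to $u'$), the edge $u'v'$, and $P_v$ reversed (from $v'$ to $v$) gives a Hamiltonian path of $T^3$ from $u$ to $v$. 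If exactly one side, say $T_u$, is a single vertex, the same construction works with the first segment dropped: $v'$ is at distance $2$ from $u$ in $T$, so $uv' \in E(T^3)$, and $u$ followed by $P_v$ reversed is the desired path; the case where $T_v$ is a single vertex is symmetric, and both being single vertices is the base case. This completes the induction, and applying it to $T$ with any edge $uv$ finishes the proof.

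The main obstacle is entirely in the setup, not the execution: a naive induction asking only for \emph{some} Hamiltonian path of $T^3$, or even one with a single prescribed endpoint, fails to glue across a deleted edge, since the free endpoint of a recursively built path may sit deep inside a subtree and hence be at distance more than $3$ from the vertex where the next subtree must be attached. Insisting that \emph{both} endpoints be the two ends of an edge is exactly what forces the crucial adjacency $u'v' \in E(T^3)$ (a path of length precisely $3$) that powers the concatenation; with that observation fixed, only the routine degenerate cases where $T_u$ or $T_v$ is trivial remain to be checked.
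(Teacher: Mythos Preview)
Your proof is correct. Note, however, that the paper does not give its own proof of this statement: it is quoted as a known result of Karaganis and used as a black box, so there is no in-paper argument to compare against. Your argument---reduce to a spanning tree and then prove the strengthened claim ``for every edge $uv$ of a tree $T$, the cube $T^3$ has a Hamiltonian $u$--$v$ path'' by induction, splitting $T$ at $uv$ and gluing via the length-$3$ edge $u'v'$---is precisely the classical proof due to Karaganis (and independently Sekanina), so you have recovered the original source argument rather than diverged from it.
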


By the above theorem, there exists a Hamiltonian path in $G_{P}^{3}$. After renaming, let $T_{1},T_{2},\ldots,T_{m}$ be the order in which the vertices of $G$ appear on that path. Then $d(T_{i},T_{i+1}) \leq 3T^{*}+2 $(max diameter of any $T_{i}$) $\leq (2\alpha+3)T^{*}$ for $1 \leq i \leq m-1$. We are now ready to prove the main results of this section.
\begin{lemma}\label{big}
	Let $T_{1},T_{2},\ldots,T_{m}$ be trees such that $ |\cov(T_{i})| \geq \lambda /c $ for some $c > 1$, $\ell(T_{i}) \leq \alpha T^{*}$ for $1 \leq i \leq m$ and  $d(T_{i},T_{i+1}) \leq D$ for $1 \leq i \leq m-1$. There exist trees $T'_{1},T'_{2},\ldots,T'_{l}$ such that $d(T'_{i},T'_{i+1}) \leq D$, $|\cov(T'_{i})| \geq \lambda$ for $1 \leq i \leq l-1$ and $\ell((T'_{i}) \leq (c-1)D+ c \alpha T^{*}$ for $1 \leq i \leq l$. Further, such trees can be computed in polynomial time.
\end{lemma}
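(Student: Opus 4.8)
\textbf{Proof proposal for Lemma~\ref{big}.}

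The plan is to greedily merge consecutive trees along the Hamiltonian-path order $T_1, T_2, \ldots, T_m$ until each merged group has cover size at least $\lambda$, connecting consecutive trees in a group by their shortest connecting path. Formally, I would walk along the sequence from left to right, accumulating trees into a current group; as soon as the union of covers in the current group reaches size $\geq \lambda$, I close off the group, declare it to be one of the output trees $T'_i$, and start a fresh group with the next tree. The last leftover group (which may have cover size $< \lambda$) becomes $T'_l$, which is why the size guarantee $|\cov(T'_i)| \geq \lambda$ is only claimed for $i \leq l-1$.

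The key observation controlling the cost is that \emph{each group, except possibly the last, consists of at most $c$ of the original trees}. Indeed, each $T_j$ contributes at least $\lambda/c$ to the running cover count (covers of distinct trees in a group are disjoint because $\cov(\cdot)$ is a global assignment), so after merging $c$ of them the count is at least $\lambda$ and the group closes. Hence a group $T'_i$ is built from at most $c$ trees joined by at most $c-1$ connecting paths; each original tree contributes length $\leq \alpha T^{*}$ and each connecting path has length $\leq D$ (since consecutive trees in the path order are within distance $D$, and the group's trees are consecutive in that order). This gives $\ell(T'_i) \leq (c-1)D + c\,\alpha T^{*}$. For the distance bound $d(T'_i, T'_{i+1}) \leq D$: the last original tree in group $i$ and the first original tree in group $i+1$ are consecutive in the Hamiltonian-path order, so they are within distance $D$, and both are subtrees of the respective merged trees, so $d(T'_i, T'_{i+1}) \leq d(\text{last of group }i, \text{first of group }i{+}1) \leq D$. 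I should also note the merged structure is a genuine tree: joining two vertex-disjoint (or even overlapping) trees by a single path keeps it connected and acyclic after discarding any redundant edges, and contracting cycles only decreases length.

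The only mild subtlety — and the place to be careful rather than the real obstacle — is the boundary case where the current group already exceeds $\lambda$ after fewer than $c$ trees versus exactly reaching it: in all cases the group size is at most $c$ original trees, so the bound holds uniformly. A second point to check is that $l$ is well-defined and the process terminates (it does, consuming at least one tree per group), and that when $m$ is small enough that even the whole sequence has total cover $< \lambda$, we simply output $l = 1$ with $T'_1$ the merge of everything, vacuously satisfying the $i \leq l-1$ conditions. Everything is clearly polynomial-time: it is a single left-to-right pass computing $O(m)$ shortest connecting paths. I expect no genuine obstacle here; the lemma is the easy combinatorial packing step that feeds into the more delicate redistribution argument of Theorem~\ref{thm:exact}.
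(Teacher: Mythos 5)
Your proposal is correct and matches the paper's proof in essence: the paper also partitions the Hamiltonian-path order into consecutive blocks of (at most) $c$ trees, joins consecutive trees by shortest paths, and derives the same cost bound $(c-1)D + c\alpha T^{*}$; your greedy closing rule versus the paper's fixed blocks of exactly $c$ is an immaterial difference (both implicitly treat $c$ as an integer, as in the application $c=4$).
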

\begin{proof}
	We make $l=\lceil m/c \rceil$ trees. Let $P(T_{p},T_{q})$ denote the shortest path connecting trees $T_{p}$ and $T_{q}$. $T'_{i}=\bigcup_{j=(i-1)c+1}^{min(ic,m)}(T_{j} \cup P(T_{j},T_{j+1}))$ for $1 \leq i \leq l$. Since $T'_{1},T'_{2},\ldots,T'_{l-1}$ are formed by combining $c$ trees, each covering at least $\lambda /c$ vertices, $|\cov(T'_{i})| \geq $ $\lambda$ vertices. Since any $V'$ is formed by combining at most $c$ trees, $\ell(T') \leq (c-1)D + c \alpha T^{*}$. Also, by construction $d(T'_{i},T'_{i+1}) \leq D$ for $1 \leq i \leq l-1$
\end{proof}
\begin{lemma}\label{exact}
	Let $T_{1},T_{2},\ldots,T_{m}$ be trees such that $ |\cov(T_{i})| \geq \lambda$ for $2 \leq i \leq m$, $\ell(T_{i}) \leq \alpha T^{*}$ for $1 \leq i \leq m$ and  $d(T_{i},T_{i+1}) \leq D$ for $1 \leq i \leq m-1$. If $\lambda$ divides $|\cup_{i=1}^{m}\cov(T_{i})|$, then there exist trees $T'_{1},T'_{2},\ldots,T'_{l}$ such that $|\cov(T'_{i})|=\lambda, \ell(T'_{i}) \leq D + 2\alpha  T^{*}$ for $1 \leq i \leq l$. Further, such trees can be found in polynomial time.
\end{lemma}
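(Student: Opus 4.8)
# Proof Proposal for Lemma~\ref{exact}

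The plan is to use the trees $T_1, T_2, \ldots, T_m$ (with $|\cov(T_i)| \geq \lambda$ for $i \geq 2$) as a ``backbone'' along which vertices flow, and to peel off groups of exactly $\lambda$ vertices one at a time, walking along the path $T_1, T_2, \ldots, T_m$ in order. The key structural fact I would exploit is that consecutive trees are within distance $D$, so any path-connected chunk of the backbone can be turned into a single connected subgraph of $G$ by adding connector paths, each of length at most $D$.

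First I would set up the peeling. Process the trees left to right. Maintain a ``current pool'' of vertices consisting of leftover (not-yet-assigned) vertices from the trees processed so far, together with the edges of those trees and the connector paths $P(T_j, T_{j+1})$ joining consecutive ones; crucially this pool is always connected and spans a contiguous block $T_a, T_{a+1}, \ldots, T_b$ of the backbone. Whenever the pool contains at least $\lambda$ uncovered vertices, I carve out a new output tree $T'$ holding exactly $\lambda$ of them: since the pool is connected, I can take a connected subgraph of it that covers any desired $\lambda$ vertices (e.g.\ take a spanning tree of the pool, root it, and cut off a subtree of the right size using the standard bottom-up chopping argument, possibly keeping an extra branch to reach exactly $\lambda$). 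The point of the hypothesis $|\cov(T_i)| \geq \lambda$ for $i \geq 2$ is that after finishing the contiguous block ending at $T_j$, the newly arrived tree $T_{j+1}$ alone contributes $\geq \lambda$ fresh vertices, so the pool never ``runs dry'' prematurely and we never need to span more than two consecutive backbone trees at once to find $\lambda$ uncovered vertices. The divisibility hypothesis $\lambda \mid |\cup_i \cov(T_i)|$ guarantees that the very last carve-out comes out to exactly $\lambda$ with nothing left over.

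The cost bound is where I would be most careful. Each output tree $T'$ is a connected subgraph living inside at most two consecutive backbone trees $T_j, T_{j+1}$ plus the single connector $P(T_j, T_{j+1})$ between them; hence $\ell(T') \leq \ell(T_j) + \ell(T_{j+1}) + d(T_j, T_{j+1}) \leq 2\alpha T^* + D$, which is exactly the claimed bound. I would need to check that the peeling scheme can indeed always be localized to two consecutive trees — this is the main obstacle. The subtlety: after carving out several $\lambda$-chunks from the block $T_a, \ldots, T_b$, the residual uncovered vertices could in principle be scattered across the whole block, so a naive connected subgraph covering them might span many trees and blow up the cost. The fix is to be disciplined about \emph{which} $\lambda$ vertices we remove: always remove vertices greedily from the ``left end'' of the current block first (e.g.\ order vertices by the index of the backbone tree they belong to, breaking ties arbitrarily), so that the leftover vertices always form a suffix of the block — concentrated in the last one or two trees. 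Then when a new tree $T_{b+1}$ arrives, the pool consists only of the leftover suffix (which sits in $T_b$, and is strictly fewer than $\lambda$ vertices, hence came entirely from $T_b$'s own vertices or the tail of processing) together with $T_{b+1}$ and one connector — spanning exactly two backbone trees. Making this invariant precise and verifying it is preserved through each carve-out is the technical heart of the argument; once it holds, the cost bound $\ell(T'_i) \leq D + 2\alpha T^*$ and the exact-size and polynomial-time claims all follow immediately.
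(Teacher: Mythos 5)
Your proposal is correct and is essentially the paper's own argument: the paper proves the lemma by induction on $m$, peeling off $\lfloor|\cov(T_1)|/\lambda\rfloor$ full groups from $T_1$ and then one mixed group of the $p$ leftover vertices of $T_1$ plus $\lambda-p$ vertices of $T_2$ (covered by $T_1\cup P(T_1,T_2)\cup T_2$, cost $\leq D+2\alpha T^*$), which is exactly your left-to-right peeling with the ``leftover is a suffix of size $<\lambda$ in the last tree'' invariant. The only simplification you missed is that no subtree-chopping is needed to realize each chunk: since output trees may share vertices and edges and $\cov$ is just an assignment, one can take the whole tree $T_j$ (or $T_j\cup P(T_j,T_{j+1})\cup T_{j+1}$) as the covering tree for each group, which makes the cost bound immediate.
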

\begin{proof}
	We will prove this by induction on $m$. If $m=1$, then $\lambda$ divides $|\cov(T_{1})|$. We partition $\cov(T_{1})$ into sets of size $\lambda$ arbitrarily. Cost of a tree covering each of these partitions will be at most $\alpha T^{*}$  ($T_{1}$ is one such tree). Let $m>1$ and $|\cov(T_{1})|=c \lambda+p$, where $c\geq 0,1 \leq p \leq \lambda -1$. We first make $c$ partitions of size $\lambda$ each out of $\cov(T_{1})$. Cost of spanning tree covering each of these partitions is at most $\alpha T^{*}$. We create a new tree, say $T_{p}$, by taking $p$ remaining points of $\cov(T_{1})$ and any $\lambda-p$ points of $\cov(T_{2})$. $|\cov(T_{p})|=\lambda$ and $\ell(T_{p}) \leq  \ell(T_{1})+D+\ell(T_{2}) \leq  D+ 2 \alpha T^{*}$. The remaining points form a smaller instance and by induction can be covered by trees with desired properties. This completes the induction step and proof of the theorem.
\end{proof}

\begin{proof}[{\bf Theorem~\ref{thm:exact}}]
	From the discussion above and applying Lemma \ref{big}, we get trees covering  at least $\lambda$ vertices and cost at most $C=(c-1)D+c\alpha T^{*}$. Using Lemma \ref{exact} on these trees, we get trees covering exactly $\lambda$ vertices and cost at most $D+2C$. Plugging in $D=(2\alpha+3)T^{*}$ gives the desired result.
\end{proof}

\begin{proof} \textbf{[Theorem  \ref{thm:capMMTCg}]}
	Plugging in $c=4$ and $\alpha=11$ from Lemma \ref{lem:goodTrees} gives a $260$-factor approximation algorithm for the capacitated min max tree cover problem.
\end{proof} 

\section{Extension to Capacitated Rooted Tree Cover}
\label{sec:rooted}
In this section, we show how to extend Theorem~\ref{thm:capMMTCg} to the rooted setting. Recall that in $\caprMMTC$, in addition to the settings of $\capMMTC$, we given a set of root vertices, $R=\{r_1, r_2, \cdots r_k\}$. The $k$ output trees need to be rooted at each of these root vertices. We prove the following theorem.

\begin{theorem}
	\label{thm:caprmmtcg} 	
	Given a graph $G=(V,E)$ with the edge lengths $\ell:E\rightarrow\Real_{\geq 0}$, a set of roots $R \subseteq V, |R|=k$ and a parameter $\lambda$, suppose there exist subtrees of $G$, $T_1, T_2, \cdots T_{k'}$, where $\cup_{i=1}^{k} \cov(T_{i})=V$,$\max_{i\in[k']} |\cov(T_i)| \leq \lambda,  V(T_{i}) \cap R \neq \phi$ and $\ell(T_i) \leq \opt, \forall i=1,2,\cdots k$. Then there exists a polynomial time algorithm that finds subtrees of $G$, $T'_1, T'_2, \cdots T'_{k'}, k'\leq k$ such that $\cup_{i=1}^{k'} \cov(T'_{i})=V$, $\max_{i\in[k']} |\cov(T'_i)| \leq \lambda$, $V(T'_i) \cap R \neq \phi$  and $\ell(T'_i) \leq \beta\opt$, where $\beta = \bigO(1)$. 
\end{theorem}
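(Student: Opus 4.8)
The plan is to mimic the three-phase structure used for the unrooted problem ($\calA$, $\calB$, $\calC$), but adapted so that every constructed tree contains a root from $R$, and to combine it with the known constant-factor approximation for the \emph{rooted} uncapacitated Min-Max Tree Cover (Even et al.\ / Arkin et al.), which plays the role that the KS-algorithm played in the unrooted case. As before, I would first guess $\opt$, delete all edges longer than $\opt$, and work component by component; within each component I would further restrict attention to the roots that the optimal solution actually uses there. The key structural difference is that the ``good'' trees can no longer be arbitrary $\ceil{\lambda/4}$-Steiner trees --- they must hang off roots. So in Phase $\calA$ I would, for each root $r\in R$, try to grow a low-cost rooted Steiner tree at $r$ covering $\ceil{\lambda/4}$ currently-uncovered vertices (using a rooted variant of Garg's $\lambda$-MST approximation, or a shortest-path-based reduction to it), keeping it if its cost is $\bigO(\opt)$. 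The analogue of Lemma~\ref{lem:atLeastHalf} should go through: any heavy optimal tree is rooted at some $r$ and has cost $\le\opt$, so a connected rooted subtree covering $\ceil{\lambda/4}$ of its vertices exists and would have been captured, hence at least half the vertices of every heavy optimal tree end up covered by good trees.

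For Phase $\calB$, I would again build the bipartite Hall-set/matching machinery between the ``bad'' trees (trees on leftover vertices, produced via the rooted KS-type algorithm) and the good rooted trees, with an edge when the two are within distance $\opt$. The same counting argument --- via the rooted analogues of Lemma~\ref{lem:ks} and Lemma~\ref{lem:hall-set} --- should show that a persistent Hall set certifies many light optimal trees, so that after attaching $\ceil{\lambda/2}$ dummy vertices to each unmatched bad tree the total vertex count stays $\le k\lambda$ and every surviving tree is $\lambda$-good. The one genuinely new subtlety here is \emph{root bookkeeping}: matched bad trees get merged into a good tree that already owns a root, which is fine; but unmatched bad trees, to which we add dummies, need their own distinct root. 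The corollary bounding the number of unmatched bad trees by $k_{light}$ must therefore be upgraded to say we can assign each unmatched bad tree a \emph{distinct} unused root from $R$ --- this needs that the roots of the light optimal trees intersecting $S$ are available, which I would extract from the same Hall-set argument (the light optimal trees contribute both the slack $|S|-|N(S)|$ and their roots).

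For Phase $\calC$, the difficulty is that Karaganis's Hamiltonian-path trick over the ``proximity graph'' $G_P$ reshuffles vertices across trees, and after reshuffling a tree may no longer contain the root it was built around --- in fact a block of $\lambda$ vertices formed by Lemma~\ref{big}/Lemma~\ref{exact} might contain several roots or none. I would handle this by only ever \emph{merging} trees (never splitting a tree away from its root): when Lemma~\ref{big} groups $c$ consecutive trees along the Hamiltonian path into one, that merged tree inherits all their roots and is connected at cost $\bigO(\opt)$; to then cut back down to exactly $\lambda$-sized covers, I would, whenever a tree's cover exceeds $\lambda$, peel off an excess $\lambda$-block and \emph{leave it attached as a separate tree sharing the path}, making sure each such block is routed to contain exactly one of the surplus roots. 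Since the global count of $\lambda$-sized pieces is $|V\cup V_{dummy}|/\lambda\le k$ and the number of available roots is $k$, a distinct root can be handed to each final tree. The main obstacle, and the part I expect to require the most care, is exactly this last point --- guaranteeing that the final partition into $\lambda$-sized covers can be made so that each piece receives its own distinct root while the cost blows up by only a constant; this is where the rooted case is strictly harder than the unrooted one, and I would likely need a small separate matching/flow argument between the $\le k$ final pieces and the $k$ roots, using that consecutive pieces along the path are within $\bigO(\opt)$ so a root can always be routed into the piece it is assigned to.
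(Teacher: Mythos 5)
Your proposal takes a much harder road than the paper and leaves the decisive step unresolved. The paper does \emph{not} redo phases $\calA$, $\calB$, $\calC$ in rooted form. It treats Theorem~\ref{thm:capMMTCg} as a black box: within each component it covers $V'-R'$ using the unrooted algorithm with capacity $\lambda-1$ (reserving one slot per tree for a root), then builds a bipartite graph between the output trees and the roots of that component, with an edge when $d(r_i,T'_j)\leq\opt$, and tries to match every tree to a distinct root. The crucial point — and the one your write-up is missing — is what to do when this matching fails. The paper observes that for a Hall set $S$ of trees, every optimal tree touching $V(S)$ has its root in $N(S)$ (because a root and any vertex it covers are within $\opt$ inside the optimal tree), so $V(S)$ admits a cover by only $|N(S)|<|S|$ capacitated trees of cost $O(\opt)$; re-running the unrooted algorithm on $V(S)$ with budget $|N(S)|$ therefore strictly reduces the number of trees, and iterating at most $k'$ times forces a perfect matching to exist. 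This is the rooted analogue of IterRefine, applied at the level of the \emph{final} trees, and it is precisely the argument that guarantees each final tree gets a distinct root within distance $\opt$.

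In your version, that guarantee is exactly the part you defer: you concede that after the Karaganis reshuffling and the peeling into $\lambda$-blocks you ``would likely need a small separate matching/flow argument'' to hand each piece a distinct nearby root. Counting ($\leq k$ pieces versus $k$ roots) is not enough — the bipartite graph between pieces and roots within $O(\opt)$ need not satisfy Hall's condition, and the observation that consecutive pieces on the Hamiltonian path are close does not let you route an arbitrary matched root into its piece at constant cost. Without the ``Hall set $\Rightarrow$ the optimum covers these vertices with fewer trees $\Rightarrow$ re-run with a smaller budget'' refinement applied to the root-assignment instance, the existence of a low-cost perfect assignment is unproven, so the proof is incomplete at its central point. (Your rooted Phase $\calA$ also has an unaddressed wrinkle: the number of good trees can exceed $k$, so they cannot all be anchored at distinct roots from the start, which is another reason the paper postpones root assignment to the very end.) The fix is to adopt the paper's reduction: reserve capacity $\lambda-1$, run the unrooted theorem, and resolve failed matchings by the iterative Hall-set refinement described above.
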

\begin{proof} Similar to Section~\ref{sec:capacitated}, we focus on one of the connected components, say $G'$ obtained after deleting all edges of length larger than $\opt$. Let $R'=\{r_1, r_2, \cdots r_{k'}\}$ be the subset of roots belonging to this component. There exists a partition of $V'-R'$ into $k'$ trees, each covering at most $\lambda-1$ vertices and cost no more than $\opt$. Using Theorem \ref{thm:capMMTCg}, we first compute subtrees of $G'$ , $T_{1},\ldots, T_{k'}$, such that $|\cov(T'_{i})| \leq \lambda -1$ and cost is  $\bigO(\opt)$. We construct a bipartite graph $H'=(\calT \cup R', E')$, where $\calT$ contains a vertex corresponding to each tree $T'_j, j=1,2,\cdots k'$. We shall abuse notations slightly and interchangeably use $T'_j$ to denote the vertex in $\calT$ corresponding to the tree $T'_j$ and the actual tree itself, noting the difference wherever necessary. $(r_i,T'_j) \in E'$ if and only if $d(r_i,T'_j) \leq \opt$. If there exists a matching in $H'$ such that all vertices of $\calT$ are matched, we add the matching root vertex to $T'_j$. This increases the cost of a component by at most $\opt$. Now each component contains at least one root vertex, covers at most $\lambda$ vertices and costs $\bigO(\opt)$. If all the vertices of $\calT$ cannot be matched, there exists a set $S \subseteq T$ such that $|N(S)| < |S|$. Let $V(S)= \cup_{T'_{j} \in S}V(T'_{j})$. In optimal solution, every vertex of $V(S)$ is contained in a tree with one of the vertices in $N(S)$ as its root. Hence, there exists $|N(S)|$ trees such that each tree covers at most $\lambda-1$ points of $V(S)$, costs no more than $4\opt$ and union of all the trees cover $V(S)$. We use the result of Theorem \ref{thm:capMMTCg} to get at most $|N(S)|$ trees covering $V(S)$, such that each tree covers at most $\lambda-1$ vertices and costs $\bigO(\opt)$. Note that this procedure is similar to $\bf{IterRefine}$ defined earlier. We now have a new set of trees. We modify our bipartite graph $H'$ by removing vertices corresponding to trees in $S$ and replacing them by vertices corresponding to newly formed trees. We modify $E'$ suitably. If there exists a matching in the new graph, we are done, otherwise we repeat the above procedure till we find a perfect matching. Note that the size (in terms of number of vertices) of $H'$ decreases by at least $|S|-|N(S)| \geq 1$ after every step and hence will halt in at most $k'$ steps.  \end{proof}

%A very similar argument as in Section~\ref{sec:capacitated} shows that the above theorem implies Theorem~\ref{thm:caprmmtc} and we skip the formal proof here.
%

\section{Conclusion}
In this paper, we give the first constant factor approximation algorithms for the hard uniform capacitated versions of unrooted and rooted Min Max Tree Cover problem on a weighted graph. Our main technical contribution lies is devising an iterative refinement procedure that distributes vertices as evenly as possible among the different trees in the cover, without incurring too much cost. 

Our current approximation factor is large ($>250$) and although we did not attempt to optimize, it is unlikely that our approach can reduce the factor to a very small constant. Such a result requires new ideas. Perhaps somewhat surprisingly, no LPs with constant integrality gap is known, even for the uncapacitated problem. We leave open the question of finding such an LP, in particular, configuration-style LPs, which can possibly lead to a small approximation ratio. On the other hand, the only known hardness result, to the best of our knowledge, follows from the hardness of the uncapacitated version~\cite{ZhouQ10}. It would be interesting to prove a better lower bound for the capacitated case, if possible. 

As mentioned in Section~\ref{sec:intro}, there is a PTAS known for the uncapacitated rooted $\MMTC$ problem on a tree metric. We conclude with our second open question of proving a PTAS for the hard capacitated version of the same.

\bibliographystyle{plain}
\bibliography{references}

\end{document}